\newtheorem{lem}{Lemma}
\newtheorem{thm}{Theorem}
\newcommand{\tanc}{\mathrm{tanc}}
\DeclareMathAlphabet{\bit}{OML}{cmm}{b}{it}
\def\<{\leqslant}           
\def\>{\geqslant}           
\def\d{\partial}
\def\wh{\widehat}
\def\wt{\widetilde}
\def\Re{\mathrm{Re}}   
\def\Im{\mathrm{Im}}   
\def\rprod{\mathop{\overrightarrow{\prod}}}
\def\lprod{\mathop{\overleftarrow{\prod}}}
\def\cH{\mathcal{H}}   
\def\mA{\mathbb{A}}    
\def\mR{{\mathbb R}}    
\def\mC{\mathbb{C}}    
\def\Tr{\mathrm{Tr}}       
\def\rT{{\rm T}}        
\def\bE{\mathbf{E}}    
\def\bM{\mathbf{M}}    
\def\re{{\rm e}}        
\def\rd{{\rm d}}        
\def\cL{\mathcal{L}}
\def\bJ{\mathbf{J}}
\def\x{\times}
\def\ox{\otimes}
\def\fF{\mathfrak{F}}
\def\fH{\mathfrak{H}}
\def\mP{\mathbb{P}}
\def\cR{\mathcal{R}}
\def\cG{\mathcal{G}}
\def\cP{\mathcal{P}}
\def\cA{\mathcal{ A}}
\def\cB{\mathcal{ B}}
\def\mH{\mathbb{H}}
\def\mS{\mathbb{S}}
\def\ups{\upsilon}
\def\Ups{\Upsilon}
\def\diag{\mathop{\mathrm{diag}}}    
\journal{the Journal of the Franklin Institute}
\begin{document}

\begin{frontmatter}



\title{State-space computation of quadratic-exponential functional rates
for linear quantum stochastic  systems}


\author{Igor G. Vladimirov (corresponding author)}
\ead{igor.g.vladimirov@gmail.com}
\author{Ian R. Petersen}
\ead{i.r.petersen@gmail.com}
\address{School of Engineering, Australian National University, Acton, 
Canberra, 2601, ACT, Australia}


\begin{abstract}
This paper is concerned with infinite-horizon growth rates of quadratic-exponential functionals (QEFs) for linear quantum stochastic systems driven by multichannel bosonic fields. Such risk-sensitive performance criteria  impose an exponential penalty on the integral of a quadratic function of the system variables,
and their minimization improves robustness properties of the system with respect to quantum statistical uncertainties and makes its behaviour more conservative in terms of tail distributions. We use a frequency-domain representation  of the QEF growth rate for the invariant Gaussian quantum state of the
system with vacuum input fields in order to compute it in state space. The QEF rate is related to a similar functional
for a classical stationary Gaussian random process generated by 
an  infinite cascade of 
linear systems. A truncation of this shaping filter allows the QEF rate to be computed with any accuracy by solving a recurrent sequence of algebraic Lyapunov equations together with an algebraic  Riccati equation. The state-space computation of the QEF rate and its comparison with the frequency-domain results are demonstrated by a numerical example for an open quantum harmonic oscillator.

\end{abstract}



\begin{keyword}
Linear quantum stochastic system
\sep
risk-sensitive control
\sep
quadratic-exponential functional
\sep
infinite-horizon growth rate
\sep
state-space realization
\sep
infinite cascade
\sep
spectral factorisation.

\MSC
81S22 
\sep
81S25 
\sep
81P16 
\sep
81R15  
\sep
93E15  	
\sep
60G15   	
\sep
93B35      
\sep
93B51.      

\end{keyword}
\end{frontmatter}



\section*{Abbreviations}

\begin{tabular}{ll}
ALE & algebraic Lyapunov equation\\

ARE & algebraic Riccati equation\\

CCR & canonical commutation relation\\

LCTI & linear continuous time invariant \\

LQG & linear-quadratic-Gaussian\\

ODE & ordinary differential equation\\

OQHO & open quantum harmonic oscillator \\

PDE & partial differential equation\\

QEF & quadratic-exponential functional \\

QSDE & quantum stochastic differential equation\\

SDE & stochastic differential equation
\end{tabular}

\section{Introduction}
\label{sec:intro}

Linear quantum stochastic systems, or open quantum harmonic oscillators (OQHOs),  provide an important class of tractable models of open quantum dynamics which is concerned with the interaction of quantum mechanical  systems with their environment. The latter may include other quantum or classical systems (for example, measuring devices) and    quantum fields. In the framework of the Hudson-Parthasarathy calculus \cite{HP_1984,P_1992,P_2015},   these models are equipped with noncommuting continuous dynamic variables (such as the quantum mechanical positions and momenta \cite{S_1994})  whose time evolution is governed by linear quantum stochastic differential equations (QSDEs)  driven by quantum analogues of the standard Wiener process \cite{KS_1991}. The OQHOs are employed as building  blocks  in linear quantum systems theory \cite{NY_2017,P_2017} which develops methods for performance analysis and synthesis of such systems with certain dynamic properties. These developments give rise to control and filtering settings for interconnections of systems \cite{GJ_2009,JG_2008},  consisting, for example,  of a quantum plant and a quantum or classical feedback controller or observer with direct or field-mediated coupling \cite{ZJ_2012}.

By analogy with linear-quadratic-Gaussian  (LQG) control for classical linear sto\-chas\-tic systems  \cite{AM_1971}, the performance of quantum networks is often described using mean square optimality criteria in the form of quadratic cost functionals to be minimised \cite{EB_2005,NJP_2009,WM_2010}. In particular,
as in the Kolmogorov-Wiener-Hopf-Kalman filtering theory,  the quadratic criteria serve to quantify and improve the quality of observers in quantum filtering problems in terms of the mean square value of the estimation error, that is,  the  discrepancy between the quantum plant variables and their estimates \cite{MJ_2012}. Similarly to their  classical predecessors,  the mean square optimality criteria for linear quantum stochastic systems are a limiting case of  appropriate quantum mechanical counterparts \cite{B_1996,J_2004,J_2005,VPJ_2018a} of quadratic-exponential cost functionals,  which originate  from classical risk-sensitive control \cite{BV_1985,J_1973,W_1981,W_1990}. One of these extensions  underlies the original quantum risk-sensitive control formulation  \cite{J_2004,J_2005} and employs time-ordered  exponentials, which differ from (though have links \cite{VPJ_2019a} with) the usual ones because of the noncommutativity of the  system variables.

The general structure of  the classical risk-sensitive performance criteria is retained by
the quadratic-exponential functional \cite{VPJ_2018a} (QEF) which is the averaged exponential of the integral of a  quadratic function of the quantum system variables over a finite time horizon. In contrast to the mean square criteria, the QEF is organised as a higher-order mixed moment of the quantum variables at different times. Due to this specific structure, the QEF  yields upper bounds \cite{VPJ_2018b}  on the worst-case mean square costs in the presence of quantum statistical uncertainty described in terms of quantum relative entropy \cite{OP_1993,OW_2010,YB_2009} of the actual system-field state with respect to its nominal model. This role of the QEF  is similar to the connections between the classical risk-sensitive criteria and minimax LQG control \cite{DJP_2000,P_2006,PJD_2000}. The QEF also provides exponential upper bounds on the tail distributions for quadratic functions of the quantum system trajectories \cite{VPJ_2018a}, which corresponds to the large deviations theory for classical  random processes \cite{DE_1997,V_2008}. These bounds depend on the QEF monotonically, so that its minimisation secures a more conservative and robust dynamic behaviour of the open quantum system. The robustness properties, including reduced sensitivity to unmodelled dynamics (such as nonlinearities), and  controlled isolation  of the quantum  system from its surroundings   are relevant, for example,    to applications in quantum optics and quantum information processing \cite{DP_2010,NC_2000,WM_2008}.

Since, in  the noncommutative quantum case,   the QEF differs both from the classical predecessors and its time-ordered exponential counterpart \cite{J_2004,J_2005},  the performance analysis and optimal control synthesis with QEF criteria demand methods for computing and  minimising such functionals. In addition to their primary relevance to quantum risk-sensitive control, such methods are also of interest on a broader scale of quantum probabilistic and algebraic connections with the moment-generating and partition functions for quadratic Hamiltonians in quantum statistical mechanics \cite{BB_2010,PS_2015,S_1994},
the operator exponential structures in the context
of operator algebras \cite{AB_2018},  and quantum mechanical extensions of the  L\'{e}vy area \cite{CH_2013,H_2018}.

Resulting from recent publications on a parametric randomization technique \cite{VPJ_2018c}, quantum Karhunen-Loeve expansions \cite{VPJ_2019b,VJP_2019} and Girsanov type representations \cite{VPJ_2020a}  for computing the QEF over a bounded time interval, a frequency-domain formula has been established in \cite{VPJ_2020_IFAC} for the infinite-horizon  asymptotic growth rate of the logarithm of the QEF for invariant  Gaussian states of stable OQHOs with vacuum input  bosonic fields \cite{P_1992}. This relation, which  has subsequently been extended to more general stationary Gaussian quantum processes \cite{VPJ_2021},  expresses the  QEF growth rate in terms of the Fourier transforms of the real and imaginary parts of the invariant quantum covariance function of the system variables. These matrix-valued spectral functions enter the QEF rate through their compositions with trigonometric functions and the log-determinant,  thus destroying the meromorphic structures which play a role in tractability of the $\cH_\infty$-entropy integral \cite{AK_1981,MG_1990} for the classical QEF rate. Nevertheless, the quantum QEF rate lends itself to numerical computation in the frequency domain using a homotopy algorithm \cite{VPJ_2020_IFAC,VPJ_2021}  similar to that for solving parameter dependent algebraic equations \cite{MB_1985}. However, each step  of this algorithm involves a time-consuming  high-resolution numerical integration over the frequency range, which decreases its practicality. At the same time, these  frequency-domain methods have already found a preliminary application to optimality conditions for measurement-based feedback control with QEF criteria \cite{VJP_2020_IFAC}.

The present paper builds on the frequency-domain representation of the QEF rate in \cite{VPJ_2020_IFAC,VPJ_2021}
for the invariant Gaussian state of  an OQHO,  driven by vacuum quantum fields,  and develops  an  approach to  its computation in state space. For this purpose, we relate the QEF rate through a nonrational spectral density to a similar functional
for an auxiliary  classical stationary Gaussian random process. The latter is produced from a standard Wiener process by
an  infinite cascade of
linear systems whose state-space matrices are computed through a recurrent sequence of algebraic Lyapunov equations (ALEs).  The complicated infinite dimensional  structure of this shaping filter comes from the trigonometric functions in the frequency-domain representation of the QEF rate  mentioned above. This key element of the QEF rate computation in state space is based on a novel spectral factorisation using a ``system transposition'' technique for rearranging mixed products of classical linear systems with their duals, which  resembles the Wick ordering \cite{W_1950} for quantum mechanical annihilation and creation operators.
A truncation of this shaping filter, combined with an additional algebraic Riccati equation (ARE),   allows the QEF rate to be computed with any accuracy.  The state-space computation of the QEF rate with different truncation orders and its comparison with the frequency-domain results are demonstrated by a numerical example for a two-mode  OQHO.

The paper is organised as follows.
Section~\ref{sec:sys} describes the class of linear quantum stochastic systems being considered and their invariant Gaussian states.
Section~\ref{sec:QEF} specifies the QEF for such a system and revisits the frequency-domain formula for its growth rate.
Section~\ref{sec:Delta} provides a spectral density representation for  the QEF growth rate.
Section~\ref{sec:inf} develops an infinite cascade factorisation of this spectral density involving a recurrent sequence of ALEs.
Section~\ref{sec:class} relates the quantum QEF rate to a similar functional for a classical Gaussian process with an infinite-dimensional shaping filter.
Section~\ref{sec:trunc} computes the classical QEF rate using a truncation of the filter together with an ARE.
Section~\ref{sec:polrat} discusses a square root polynomial 
approximation for the entire function associated with the infinite cascade.
Section~\ref{sec:exp} provides an illustrative numerical example of state-space computation of the QEF rate  for an OQHO.
Section~\ref{sec:conc} makes concluding remarks and outlines further directions of research.

\section{Open quantum harmonic oscillators}
\label{sec:sys}

We consider an open quantum harmonic oscillator (OQHO) with an even number $n$ of dynamic variables $X_1, \ldots, X_n$, which are time-varying self-adjoint  operators on a complex separable Hilbert space $\fH$, satisfying at every moment of time the canonical commutation relations (CCRs)
\begin{equation}
\label{XCCR}
    [X,X^\rT] = 2i\Theta,
    \qquad
    X := \begin{bmatrix}
      X_1\\
      \vdots\\
      X_n
    \end{bmatrix}
\end{equation}
with a nonsingular matrix $\Theta = -\Theta^\rT \in \mR^{n\x n}$. Here, the matrix transpose $(\cdot)^\rT$ is applied to vectors of operators as if they consisted of scalars  (vectors are organised as columns unless indicated otherwise), and the commutator $[\alpha, \beta]:= \alpha \beta - \beta \alpha$ of linear operators is extended to vectors $\xi:= (\xi_j)_{1\< j \< a}$ and $\eta:= (\eta_k)_{1\< k \< b}$ of operators 
as $[\xi, \eta^\rT] := ([\xi_j, \eta_k])_{1\< j \< a, 1\< k \< b}$.
The system  also has an even number $m$ of self-adjoint output  quantum variables $Y_1, \ldots, Y_m$ on $\fH$, assembled into a vector
$$
    Y := \begin{bmatrix}
      Y_1\\
      \vdots\\
      Y_m
    \end{bmatrix},
$$
 and is governed by linear QSDEs \cite{NY_2017,P_2017}
\begin{align}
\label{dX}
    \rd X & = AX \rd t + B\rd W,\\
\label{dY}
    \rd Y & = CX \rd t + \rd W,
\end{align}
with constant coefficients comprising the matrices $A \in \mR^{n\x n}$, $B \in \mR^{n\x m}$, $C \in \mR^{m\x n}$ which are described below.
These QSDEs are driven by the vector
$$
    W := \begin{bmatrix}
      W_1\\
      \vdots\\
      W_m
    \end{bmatrix}
$$
of self-adjoint quantum Wiener processes $W_1, \ldots, W_m$ on a symmetric Fock space $\fF$ \cite{P_1992} with the Ito table
\begin{equation}
\label{Omega}
    \rd W \rd W^\rT
    = \Omega \rd t,
    \qquad
    \Omega: = I_m + iJ,
\end{equation}
where
\begin{equation}
\label{JJ}
    J
    :=
    \bJ \ox I_{m/2}
    =
    \begin{bmatrix}
      0 & I_{m/2}\\
    -I_{m/2} & 0
    \end{bmatrix}.
\end{equation}
Here, $\ox$ is the Kronecker product of matrices, $I_r$ is the identity matrix of order $r$, and the matrix
\begin{equation}
\label{bJ}
\bJ: = {\begin{bmatrix}
        0 & 1\\
        -1 & 0
    \end{bmatrix}}
\end{equation}
spans the one-dimensional subspace of antisymmetric matrices of order 2. In accordance with (\ref{Omega}),  the matrix $J$ in (\ref{JJ}) specifies the commutation structure of $W$ as
\begin{equation}
\label{WWcomm}
  [W(s), W(t)^\rT]
  =
  2i \min(s,t)J,
  \qquad
  s, t \> 0.
\end{equation}
The matrices 
\begin{equation}
\label{ABC}
    A = 2\Theta (R + M^\rT J M),
     \qquad
     B = 2\Theta M^\rT,
     \qquad
     C = 2JM
\end{equation}
in (\ref{dX}), (\ref{dY})
are parameterised
by the energy and coupling matrices $R = R^\rT \in \mR^{n\x n}$, $M \in \mR^{m\x n}$
 specifying the system Hamiltonian 
$    
    \frac{1}{2} X^\rT R X$ 
and the vector $MX$ of $m$ system-field coupling operators, which describe the energetics of the quantum system and its interaction with the external fields. For any nonsingular matrix $\sigma  \in \mR^{n\x n}$,   the transformation
\begin{equation}
\label{SX}
    X\mapsto \sigma X
\end{equation}
of the system variables leads to another OQHO with appropriately modified CCR, energy and coupling matrices:
\begin{equation}
\label{STRM}
    (\Theta, R, M)
    \mapsto
    (\sigma \Theta \sigma ^\rT,
    \sigma ^{-\rT} R \sigma ^{-1},
    M\sigma ^{-1}),
\end{equation}
where $(\cdot)^{-\rT}:= ((\cdot)^{-1})^\rT$. The state-space matrices (\ref{ABC}) are transformed  similarly to those of classical linear systems:
\begin{equation}
\label{SABC}
    (A,B,C)
    \mapsto
    (\sigma A\sigma ^{-1},
    \sigma B,
    C\sigma ^{-1}).
\end{equation}
However, their
special structure is of quantum nature and imposes the physical realizability (PR) constraints \cite{JNP_2008}
\begin{align}
\label{PR1}
    A\Theta + \Theta A^\rT + \mho & = 0, \\
\label{PR2}
    \Theta C^\rT + BJ & = 0,
\end{align}
where
\begin{equation}
\label{BJB}
    \mho:=
    BJB^\rT
    =
    -\mho^\rT
\end{equation}
is an auxiliary real matrix of order $n$,  which inherits its antisymmetry from the CCR matrix $J$ of the quantum Wiener  process $W$ in (\ref{JJ}).
The PR conditions (\ref{PR1}), (\ref{PR2}) are closely related to the preservation of the CCRs (\ref{XCCR}) together with the commutativity
$$
    [X(t), Y(s)^\rT] = 0,
    \qquad
    t\> s \> 0.
$$
In turn, this nondemolition property \cite{B_1983,B_1989} is related to the fact  that the output field $Y$ of the system has the same commutation structure as the quantum Wiener process $W$ in (\ref{WWcomm}):
$$
  [Y(s), Y(t)^\rT]
  =
  2i \min(s,t)J,
  \qquad
  s, t \> 0.
$$
Also note that the CCRs (\ref{XCCR})  for the system variables are part of their two-point CCRs \cite{VPJ_2018a}
\begin{equation}
\label{XXCCR}
    [X(s), X(t)^\rT]
    =
    2i\Lambda(s-t),
    \qquad
    s,t\>0,
\end{equation}
with
\begin{equation}
\label{Lambda}
    \Lambda(\tau)
     :=
    \left\{
    {\small\begin{matrix}
    \re^{\tau A}\Theta & {\rm if}\  \tau\> 0\\
    \Theta\re^{-\tau A^{\rT}} & {\rm if}\  \tau< 0\\
    \end{matrix}}
    \right.
    =
    -\Lambda(-\tau)^\rT,
    \qquad
    \tau \in\mR,
\end{equation}
from which (\ref{XCCR}) follows since
\begin{equation}
\label{Lambda0}
    \Lambda(0)= \Theta.
\end{equation}
The CCRs (\ref{XXCCR}) and their one-point case (\ref{XCCR}) are a consequence of the commutation structure of the system variables, as well as the external quantum fields,  and   hold regardless of a particular quantum state. The latter is described by a positive semi-definite self-adjoint density operator $\rho$ of unit trace (that is,  $\rho = \rho^\dagger \succcurlyeq 0$ and $\Tr \rho = 1$,  with $(\cdot)^\dagger$ the operator adjoint) on the system-field space $\fH:= \fH_0\ox \fF$, where $\fH_0$ is the initial system space for the action of $X_1(0), \ldots, X_n(0)$. The quantum state specifies the expectation
\begin{equation}
\label{bE}
    \bE \zeta := \Tr(\rho \zeta)
\end{equation}
for quantum variables $\zeta$ on $\fH$.
We will be concerned with the tensor-product states
\begin{equation}
\label{rho}
    \rho = \rho_0 \ox \ups,
\end{equation}
where $\rho_0$ is the initial quantum state of the system on $\fH_0$, and $\ups$ is the vacuum state for the quantum Wiener process $W$ on $\fF$ with the quasi-characteristic functional (QCF) \cite{CH_1971,HP_1984,P_1992}
$$
    \bE \re^{i\int_0^T f(t)^\rT \rd W(t)}
    =
    \re^{-\frac{1}{2} \int_0^T |f(t)|^2\rd t},
    \qquad
    f \in L^2([0,T], \mR^m),
    \quad
    T >0,
$$
where the averaging (\ref{bE}) reduces to that over the vacuum field state $\ups$,
since $\bE \eta = \Tr (\ups \eta)$ for any quantum variable on the Fock space $\fF$. In this case (of vacuum input fields, statistically independent of the initial system variables in view of (\ref{rho})), and assuming that the matrix $A$ in (\ref{ABC}) is Hurwitz,  the system variables of the OQHO have  a unique invariant multi-point Gaussian quantum state \cite{VPJ_2018a}. The latter is specified by the zero  mean $\bE X = 0$ and the two-point quantum covariance function
\begin{equation}
\label{EXX}
  \bE (X(s)X(t)^\rT) = P(s-t) + i\Lambda(s-t),
  \qquad
  s, t \> 0,
\end{equation}
in the sense of the QCF
$$
    \bE \re^{i\int_0^T f(t)^\rT X(t)\rd t}
    =
    \re^{-\frac{1}{2} \int_{[0,T]^2} f(s)^\rT P(s-t) f(t)\rd s\rd  t},
    \qquad
    f \in L^2([0,T], \mR^n),
    \
    T >0.
$$
The imaginary part of (\ref{EXX}) is given by (\ref{Lambda}) regardless of the quantum state, while its real part in the invariant Gaussian state is
\begin{equation}
\label{P}
    P(\tau)
     :=
    \left\{
    {\small\begin{matrix}
    \re^{\tau A} \Gamma & {\rm if}\  \tau\> 0\\
    \Gamma \re^{-\tau A^{\rT}} & {\rm if}\  \tau< 0\\
    \end{matrix}}
    \right.
    =
    P(-\tau)^\rT,
    \qquad
    \tau \in \mR,
\end{equation}
where
\begin{equation}
\label{Gamma}
    \Gamma:= \cL_A(BB^\rT)
\end{equation}
is the controllability Gramian of the pair $(A,B)$ in (\ref{ABC}) satisfying the algebraic Lyapunov equation (ALE)
\begin{equation}
\label{PALE}
    A\Gamma + \Gamma A^\rT + BB^\rT = 0.
\end{equation}
In (\ref{Gamma}), use is made of a linear operator $\cL_A$ on $\mC^{n\x n}$, associated with the Hurwitz matrix $A$ as
\begin{equation}
\label{cL}
  \cL_A(V) : = \int_{\mR_+} \re^{tA} V \re^{tA^\rT}\rd t,
  \qquad
  V \in \mC^{n\x n}.
\end{equation}
In comparison with the classical case, the real covariance kernel $P$ in (\ref{P}) satisfies a stronger property of positive semi-definiteness of the quantum covariance kernel $P+i\Lambda$ in (\ref{EXX}). In particular, at any moment of time $t\> 0$,    the one-point quantum covariance matrix of the system variables satisfies
$$
    \bE(X(t)X(t)^\rT)
    =
    \Gamma + i\Theta
    =
    \cL_A(B\Omega B^\rT )
    \succcurlyeq
    0
$$
as the solution of the ALE
\begin{equation*}
\label{PALE1}
    A(\Gamma + i\Theta) + (\Gamma + i\Theta) A^\rT + B\Omega B^\rT = 0,
\end{equation*}
which can be obtained by combining (\ref{PALE}) with the PR condition (\ref{PR1}) and using the quantum Ito matrix $\Omega \succcurlyeq 0$ from (\ref{Omega}).

\section{Quadratic-exponential functional growth rate in frequency domain }
\label{sec:QEF}

Assuming that the matrix $A$ in (\ref{ABC}) is Hurwitz, the input fields are in the vacuum state, and the  OQHO (\ref{dX}), (\ref{dY}) is in the invariant Gaussian quantum state, we are concerned with the infinite-horizon growth rate
\begin{equation}
\label{Upsdef}
    \Ups(\theta):= \lim_{T\to +\infty}
    \Big(
    \frac{1}{T}
    \ln
    \Xi_{\theta,T}
    \Big)
\end{equation}
(whose existence was established in \cite{VPJ_2018a})
for the quadratic-exponential functional  (QEF)
\begin{equation}
\label{Xi}
    \Xi_{\theta,T}
    :=
    \bE
    \re^{\frac{\theta}{2}  \int_0^T X(t)^\rT X(t)\rd t},
\end{equation}
where $\theta\> 0$ is a risk sensitivity parameter. In particular, at $\theta=0$, the QEF reduces to $\Xi_{0,T}=1$, and hence,
\begin{equation}
\label{Upsat0}
    \Ups(0)=0.
\end{equation}
For positive values of $\theta$, the QEF (\ref{Xi}) imposes an exponential penalty on the system variables, so that the minimisation  of $\Xi_{\theta, T}$ (at finite time horizons $T$) or its growth rate $\Ups(\theta)$ secures useful robustness  properties for the quantum system with respect to statistical  uncertainties \cite{VPJ_2018b}  and makes its behaviour more conservative in terms of upper bounds on the tail distributions of system variables \cite{VPJ_2018a}.

Instead of $X^\rT X = \sum_{k=1}^n X_k^2$, the integrand in (\ref{Xi}) can be organised as a more complicated quadratic form $X^\rT V X$, specified by a real positive definite symmetric weighting matrix $V$ of order $n$,  which quantifies the relative importance of the system variables. However, this can be achieved  by applying (\ref{Xi}) to the OQHO resulting from the transformation (\ref{SX})--(\ref{SABC}) with $\sigma := \sqrt{V}$. Furthermore, although the discussion can also be extended to singular weighting matrices $V\succcurlyeq 0$ as limit cases, we will not consider them for simplicity.

The limit value (\ref{Upsdef})  lends itself to computation in frequency domain. More precisely, under the additional constraint
\begin{equation}
\label{BJBdet}
    \det \mho \ne 0,
\end{equation}
on the matrix $\mho$ in (\ref{BJB}),
which implies that $B$ is of full row rank, or equivalently,
\begin{equation}
\label{BBpos}
  BB^\rT \succ 0
\end{equation}
(and hence, $n\< m$ with necessity),
it was found \cite[Theorem~1]{VPJ_2020_IFAC}  (see also \cite{VPJ_2021}) that the QEF growth rate admits the frequency-domain representation
\begin{equation}
\label{Ups}
    \Ups(\theta)
     =
    -
    \frac{1}{4\pi}
    \int_{\mR}
    \ln\det
    D_\theta(\lambda)
    \rd \lambda.
\end{equation}
Here,
\begin{equation}
\label{D}
    D_\theta(\lambda)
    :=
    c_\theta(\lambda)
-
        \Phi(\lambda)
        \Psi(\lambda)^{-1}
        s_\theta(\lambda)
\end{equation}
is a $\mC^{n\x n}$-valued function on $\mR$ involving 
\begin{equation}
\label{cs}
    c_\theta(\lambda)
    :=
      \cos(
        \theta \Psi(\lambda)
    ),
    \qquad
    s_\theta(\lambda)
    :=
      \sin(
        \theta \Psi(\lambda)
    )
\end{equation}
along with the Fourier transforms  of the invariant two-point real covariance and commutator kernels $P$, $\Lambda$ of the system variables in (\ref{EXX}), (\ref{P}), (\ref{Lambda}):
\begin{align}
\nonumber
    \Phi(\lambda)
      & :=
    \int_\mR \re^{-i\lambda t }
    P(t)
    \rd t\\
\label{Phi0}
    & =
    F(i\lambda) F(i\lambda)^*
    =
        E(i\lambda) BB^\rT E(i\lambda)^*, \\
\nonumber
    \Psi(\lambda)
    & :=
    \int_\mR \re^{-i\lambda t }
    \Lambda(t)
    \rd t\\
\label{Psi0}
    & =
    F(i\lambda) J F(i\lambda)^*
    =
    E(i\lambda) \mho E(i\lambda)^*,
    \qquad
    \lambda \in \mR,
\end{align}
where $(\cdot)^*:= {{\overline{(\cdot)}}}^\rT$ is the complex conjugate transpose, and $\mho$ is the matrix from (\ref{BJB}).
The functions $\Phi$, $\Psi$ are related to the $\mC^{n\x m}$-valued real-rational transfer function 
\begin{equation}
\label{F0}
    F(s)
    :=
    E(s)B,
    \qquad
    s \in \mC,
\end{equation}
 from the incremented input quantum Wiener process $W$ of the OQHO (\ref{dX}), (\ref{dY}) to the process $X$. 
Here, use is made of an auxiliary $\mC^{n\x n}$-valued
transfer function
\begin{equation}
\label{Es}
    E(s)
    :=
    (sI_n - A)^{-1},
\end{equation}
which is well defined on the imaginary axis $i\mR$ since $A$ is Hurwitz. Its  state-space realisation
\begin{equation}
\label{Ereal}
    E
    =
      \left[
    \begin{array}{c|c}
    A & I_n\\
      \hline
      I_n &  0
    \end{array}
    \right]
\end{equation}
is identified with a classical  strictly proper linear continuous time invariant (LCTI) system. The conjugate system has the state-space realisation
\begin{equation}
\label{Esimreal}
    E^\sim
    =
      \left[
    \begin{array}{c|c}
    -A^\rT & I_n\\
      \hline
      -I_n &  0
    \end{array}
    \right]
\end{equation}
and the transfer function
\begin{equation}
\label{Esim}
    E^\sim(s)
    =
    -(sI_n + A^\rT)^{-1}.
\end{equation}
In view of (\ref{Omega}), (\ref{Phi0}), (\ref{Psi0}), the Fourier transform of the two-point quantum covariance kernel (\ref{EXX}) of the system variables of the OQHO in the invariant Gaussian quantum state takes the form
\begin{align}
\nonumber
  \Phi(\lambda)+i\Psi(\lambda)
  & =
    \int_\mR
    \re^{-i\lambda t }
    (P(t)+i\Lambda(t))
    \rd t\\
\nonumber
  & =
  F(i\lambda)F(i\lambda)^* + iF(i\lambda) J F(i\lambda)^* \\
\label{QSD}
  & =
  F(i\lambda)\Omega F(i\lambda)^*
  =
  E(i\lambda)B \Omega B^\rT E(i\lambda)^*,
  \qquad
  \lambda \in \mR,
\end{align}
with values in the set $\mH_n^+$ of complex positive semi-definite Hermitian matrices of order $n$,  and plays the role of a quantum spectral density for the process $X$.

The representation (\ref{Ups}) for the QEF growth rate is valid for sufficiently small values of $\theta$ in the sense that
\begin{equation}
\label{spec1}
    \theta
    \sup_{\lambda \in \mR}
    \lambda_{\max}
    (
        \Phi(\lambda)
        \tanc
        (\theta \Psi(\lambda))
    )
    < 1
\end{equation}
(where $\lambda_{\max}(\cdot)$ is the largest eigenvalue of a matrix with a real spectrum)
with the threshold value
\begin{equation}
\label{theta*}
    \theta_*
    :=
    \sup\{\theta > 0:\ (\ref{spec1})\ {\rm is\ satisfied}\}
\end{equation}
(at which (\ref{spec1}) becomes an equality) being different from its classical counterpart
\begin{equation}
\label{theta0}
    \theta_0
    :=
    1\big/\sup_{\lambda \in \mR}
    \lambda_{\max}
    (
        \Phi(\lambda)
    )
    =
    \frac{1}{\|F\|_\infty^2}
\end{equation}
using the $\cH_\infty$-norm $\|F\|_\infty$ of (\ref{F0}). 
As can be seen from \cite[Proof of Theorem~1]{VPJ_2020_IFAC} (see also \cite{VPJ_2021}), the significance of (\ref{spec1}) for well-posedness of the integrand  in (\ref{Ups}) is clarified by
\begin{align}
\label{DD1}
    D_\theta
    & =
    \sqrt{c_\theta }
    \wt{D}_\theta
    \sqrt{c_\theta },\\
\label{DD2}
    \wt{D}_\theta
    & :=
    c_\theta ^{-1/2}
    \sqrt{\Phi }\,
    \wh{D}_\theta
    \Phi ^{-1/2}
    \sqrt{c_\theta },\\
\label{DD3}
    \wh{D}_\theta
    & :=
    I_n - \theta\sqrt{\Phi }\,  \tanc(\theta \Psi )\sqrt{\Phi }
\end{align}
(the dependence on $\lambda$ is omitted for brevity),
where the matrix $\wt{D}_\theta(\lambda)$ is similar to the Hermitian matrix $\wh{D}_\theta(\lambda)$ which is positive definite for any $\lambda \in \mR$  under the condition (\ref{spec1}). Here, use is made of positive definiteness of the matrices $c_\theta(\lambda)$ from (\ref{cs}) and $\Phi(\lambda)$ from (\ref{Phi0}) 
ensured by (\ref{BBpos}) 
and $A$ being Hurwitz (recall that the latter makes the function $E$ in (\ref{Es}) well defined on the imaginary axis).   
Therefore, in view of the invariance of the determinant of a matrix under similarity transformations,  (\ref{DD1})--(\ref{DD3}) lead to
$$
    \ln\det D_\theta(\lambda) = \ln\det c_\theta(\lambda) + \ln\det \wh{D}_\theta(\lambda),
$$
which is well-defined for any $\lambda \in \mR$ under the condition (\ref{spec1}).
Furthermore, as established in  \cite[Theorem~2]{VPJ_2020_IFAC} (see also \cite{VPJ_2021}), the function (\ref{D}) satisfies the second-order linear ODE
\begin{equation}
\label{D''}
  D_\theta(\lambda)''
  + D_\theta(\lambda) \Psi(\lambda)^2 = 0,
  \qquad
  \lambda \in \mR,
\end{equation}
with the initial conditions
$$
    D_0 = I_n,
    \qquad
    D_0' = -\Phi,
$$
where the derivatives  $(\cdot)':= \d_\theta(\cdot)$ and $(\cdot)'':= \d_\theta^2(\cdot)$ are with respect to the risk sensitivity parameter $\theta$. Also, it was obtained there (for a homotopy \cite{MB_1985} algorithm of computing $\Ups$) that
$$
  \Ups'(\theta)
  =
    \frac{1}{4\pi}
    \int_{\mR}
    \Tr U_\theta(\lambda)
    \rd \lambda,
$$
with the initial condition (\ref{Upsat0}), where
\begin{align}
\nonumber
  U_\theta(\lambda)
  := &
   -D_\theta(\lambda)^{-1}D_\theta(\lambda)'\\
\nonumber
  = &
    (    c_\theta(\lambda)
        -
        \Phi(\lambda)
        \Psi(\lambda)^{-1}
        s_\theta(\lambda)
        )^{-1}\\
  \label{U}
        & \x (\Phi(\lambda) c_\theta(\lambda)
    +\Psi(\lambda)s_\theta(\lambda)
    )
\end{align}
is the negative ``logarithmic derivative'' of  $D_\theta(\lambda)$ with respect to $\theta$ satisfying a Riccati ODE
\begin{equation}
\label{U'}
  U_\theta(\lambda)' = \Psi(\lambda)^2 + U_\theta(\lambda)^2,
  \qquad
  \lambda\in \mR,
\end{equation}
with the initial condition
$$
    U_0 = \Phi.
$$
The quadratic nonlinearity on the right-hand side of (\ref{U'}) has a bearing on the trigonometric identities for the $\cos$ and $\sin$ functions which are present in (\ref{D}), (\ref{U}). These functions, evaluated in (\ref{cs}) at the matrix-valued  function $\theta \Psi$,  are related by the Euler identity
\begin{equation}
\label{Euler}
    c_\theta(\lambda) \pm i s_\theta(\lambda)  = \re^{\pm i\theta \Psi(\lambda)} \succ 0
\end{equation}
to two one-parameter matrix groups $\{\re^{\pm i\theta \Psi}: \theta \in \mR\}$ which underlie   the general solution $C_+ \re^{i\theta \Psi(\lambda)} + C_- \re^{-i\theta \Psi(\lambda)}$    of the ODE (\ref{D''}) with arbitrary constant matrices $C_{\pm} \in \mC^{n\x n}$. 
The Hermitian property and positive definiteness of the right-hand side of     (\ref{Euler}) follows from the matrix $\Psi(\lambda)$ in (\ref{Psi0}) being skew Hermitian for any $\lambda \in \mR$.

\section{A spectral density representation of the QEF growth rate}
\label{sec:Delta}

Under the condition (\ref{BJBdet}), the matrix  (\ref{D}) can be expressed in terms of the matrix exponentials $\re^{\pm i\theta \Psi}$ from (\ref{Euler}) as
\begin{align}
\nonumber
    D_\theta
    & :=
        \frac{1}{2}
        (\re^{i\theta \Psi}+ \re^{-i\theta \Psi})
        -
        \Phi
        \Psi^{-1}
        \frac{1}{2i}
        (\re^{i\theta \Psi}- \re^{-i\theta \Psi})\\
\nonumber
    & = \frac{1}{2}
    (I_n + i\Phi\Psi^{-1}) \re^{i\theta \Psi}+
    \frac{1}{2}
    (I_n - i\Phi\Psi^{-1}) \re^{-i\theta \Psi}\\
\nonumber
    & =
    \Big(
        I_n + \frac{1}{2} (I_n +i\Phi \Psi^{-1})(\re^{2i\theta \Psi}-I_n)
    \Big)
    \re^{-i\theta \Psi}\\
\nonumber
    & =
    \Big(
        I_n - \frac{1}{2i} (\Phi - i\Psi)\Psi^{-1}(\re^{2i\theta \Psi}-I_n)
    \Big)
    \re^{-i\theta \Psi}\\
\label{Dexp}
    & =
    (
        I_n -\theta (\Phi - i\Psi)\phi(2i\theta \Psi)
    )
    \re^{-i\theta \Psi},
\end{align}
where
\begin{equation}
\label{phi}
  \phi(u)
  :=
  \left\{
  \begin{matrix}
    1 & {\rm if} & u = 0    \\
    {\frac{\re^{u}-1}{u}} & {\rm if} & u \ne 0
  \end{matrix}
  \right.
  =
  \sum_{k=0}^{+\infty}
  \phi_k u^k,
  \qquad
  u  \in \mC,
\end{equation}
is an entire function with positive values on the real line ($\phi(\mR)\subset (0, +\infty)$) and the coefficients
\begin{equation}
\label{phik0}
    \phi_k:= \frac{1}{(k+1)!},
    \qquad
    k = 0, 1, 2, \ldots.
\end{equation}
In (\ref{Dexp}),   the function $\phi$  is evaluated  \cite{H_2008} at the Hermitian matrix $2i\theta \Psi$ and maps it to a positive definite Hermitian matrix:
\begin{equation}
\label{phipos}
    \phi(2i\theta \Psi(\lambda)) \succ 0 ,
    \qquad
    \lambda \in \mR.
\end{equation}
Now, similarly to (\ref{QSD}),
\begin{equation}
\label{PhiPsiEE}
    \Phi-i\Psi = F\Omega^\rT F^* = EB\Omega^\rT B^\rT E^*,
\end{equation}
where the transfer matrices $F(s)$,  $E(s)$ from (\ref{F0}),  (\ref{Es}) are evaluated at $s:= i\lambda$, with $\lambda \in \mR$, and use is made of
\begin{equation}
\label{OmegaT}
    \Omega^\rT = \overline{\Omega} = I_m - iJ \succcurlyeq 0
\end{equation}
due to the structure of the quantum Ito matrix $\Omega$ in (\ref{Omega}).  Hence, the matrix $B\Omega^\rT B^\rT$ has a square root
\begin{equation}
\label{BOBroot}
  S:= \sqrt{B\Omega^\rT B^\rT} \in \mH_n^+
\end{equation}
and can be replaced with $S^2$.
Since $\det E(i\lambda)\ne 0$ at any frequency $\lambda$, then, in view of (\ref{PhiPsiEE}), the matrix 
\begin{equation}
\label{PhiPsiSig}
    (\Phi - i\Psi)\phi(2i\theta \Psi)
     =
    E S^2 E^* \phi(2i\theta \Psi)
     =
    ES^2
    \Sigma_\theta E^{-1}
\end{equation}
is related by a similarity transformation (and hence, is isospectral) to $S^2\Sigma_\theta$,
where
\begin{equation}
\label{Sigma}
    \Sigma_\theta
    :=
    E^*
     \phi(2i\theta \Psi) E
     =
    E^*
    \sum_{k=0}^{+\infty}
    \phi_k
    (2i\theta E \mho E^* )^k
    E
    \succ 0
\end{equation}
is a complex positive definite Hermitian matrix for any $\lambda \in \mR$. For what follows, we associate with (\ref{Sigma}) a
function $\Delta_\theta: \mR\to \mH_n^+$ by
\begin{equation}
\label{Delta}
    \Delta_\theta(\lambda)
     :=
    S \Sigma_\theta(\lambda)S,
    \qquad
    \lambda \in \mR,
\end{equation}
which has all the properties of the spectral density of a $\mC^n$-valued random process.
Indeed, the Hermitian property and positive semi-definiteness of the matrix $\Delta_\theta(\lambda)$ at any frequency $\lambda$ are inherited from $\Sigma_\theta(\lambda)$ in (\ref{Sigma}) and the Hermitian property of $S$ in (\ref{BOBroot}). Also, due to the transfer function $E$ (and hence, $\Psi$ in (\ref{Psi0})) being strictly proper, and $\phi(0)=1$,  whereby $\|\Delta_\theta(\lambda)\| = o(1/\lambda^2)$ as $\lambda \to \infty$, the function $\Delta_\theta$ is absolutely integrable:
$$
  \int_\mR
  \|\Delta_\theta(\lambda)\|
  \rd \lambda
  <
  +\infty
$$
(this property holds regardless of a particular choice of the matrix norm $\|\cdot\|$). The role of $\Delta_\theta$ for computing the QEF growth rate (\ref{Ups}) is clarified by the following lemma.

\begin{lem}
\label{lem:UpsDel}
Under the condition (\ref{BJBdet}), for any $\theta>0$ subject to (\ref{spec1}), the QEF growth rate (\ref{Ups}) can be represented as
\begin{equation}
\label{UpsDel}
    \Ups(\theta)
     =
    -
    \frac{1}{4\pi}
    \int_{\mR}
    \ln\det
    (I_n - \theta \Delta_\theta(\lambda))
    \rd \lambda
\end{equation}
in terms of (\ref{Delta}). \hfill$\square$
\end{lem}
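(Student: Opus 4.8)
The plan is to rewrite the integrand $\ln\det D_\theta(\lambda)$ of (\ref{Ups}) by means of the exponential representation (\ref{Dexp}) and the spectral density $\Delta_\theta$ from (\ref{Delta}). Since the determinant is multiplicative, (\ref{Dexp}) yields
$$
  \det D_\theta
  =
  \det\big(I_n - \theta(\Phi - i\Psi)\phi(2i\theta\Psi)\big)\,
  \det\re^{-i\theta\Psi},
$$
so it suffices to identify the first factor with $\det(I_n - \theta\Delta_\theta)$ and to show that the second one contributes nothing after integration over $\lambda\in\mR$.

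For the first factor, I would invoke (\ref{PhiPsiSig}), which exhibits $(\Phi - i\Psi)\phi(2i\theta\Psi) = E S^2\Sigma_\theta E^{-1}$ as being similar, and hence cospectral, to $S^2\Sigma_\theta$, with $S$ and $\Sigma_\theta$ from (\ref{BOBroot}), (\ref{Sigma}). Writing $S^2\Sigma_\theta = S\,(S\Sigma_\theta)$ and using the elementary fact that the products $XY$ and $YX$ of square matrices have the same characteristic polynomial (applied with $X = S$, $Y = S\Sigma_\theta$, which is needed because the Hermitian square root $S$ in (\ref{BOBroot}) may be singular), I get that $S^2\Sigma_\theta$ is cospectral with $(S\Sigma_\theta)S = S\Sigma_\theta S = \Delta_\theta$. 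Since $\det(I_n - \theta(\cdot))$ depends only on the spectrum, this gives $\det\big(I_n - \theta(\Phi-i\Psi)\phi(2i\theta\Psi)\big) = \det(I_n - \theta\Delta_\theta(\lambda))$, which is moreover a positive real number: indeed $\det D_\theta > 0$ by the factorisation (\ref{DD1})--(\ref{DD3}) with $c_\theta\succ 0$ and $\wh D_\theta\succ 0$ under (\ref{spec1}), while $\det\re^{-i\theta\Psi}$ is a positive real by the next step.

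For the second factor, $\det\re^{-i\theta\Psi} = \re^{-i\theta\Tr\Psi}$. Because $\mho = -\mho^\rT$ is real, the matrix $\Psi(\lambda)= E(i\lambda)\mho E(i\lambda)^*$ from (\ref{Psi0}) is skew-Hermitian, so $\Tr\Psi(\lambda)$ is purely imaginary and $-i\theta\Tr\Psi(\lambda)\in\mR$; hence $\re^{-i\theta\Tr\Psi(\lambda)}>0$. All three determinants above being positive reals, I may take real logarithms to obtain the pointwise identity
$$
  \ln\det D_\theta(\lambda)
  =
  \ln\det(I_n - \theta\Delta_\theta(\lambda)) - i\theta\Tr\Psi(\lambda),
  \qquad
  \lambda\in\mR,
$$
with no branch ambiguity. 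It then remains to integrate and to note that $\int_\mR\Tr\Psi(\lambda)\rd\lambda = 0$: this follows either from $\Psi(-\lambda)=\overline{\Psi(\lambda)}$ (which, combined with $\Tr\Psi(\lambda)$ being purely imaginary, makes $\Tr\Psi$ an odd integrable function), or from Fourier inversion, $\int_\mR\Psi(\lambda)\rd\lambda = 2\pi\Lambda(0) = 2\pi\Theta$ by (\ref{Lambda0}), together with $\Tr\Theta = 0$ since $\Theta = -\Theta^\rT$. Dividing by $-4\pi$ then yields (\ref{UpsDel}); absolute integrability of $\ln\det(I_n - \theta\Delta_\theta)$ is inherited from that of $\Delta_\theta$ recorded before the lemma.

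The step I expect to be the most delicate is the treatment of the matrix-exponential factor $\re^{-i\theta\Psi}$: one must confirm that it produces a purely imaginary trace, that this is compatible with the real-logarithm convention already implicit in (\ref{Ups}), and that it integrates to zero — all of which rest on the skew-Hermitian structure of $\Psi$ inherited from the antisymmetry of $\mho$ in (\ref{BJB}) and on the decay $\|\Psi(\lambda)\|=O(1/\lambda^2)$ guaranteeing convergence of the integrals in question. The cospectrality chain through the possibly singular $S$ is the other point needing a little care, which the $XY$/$YX$ characteristic-polynomial identity handles cleanly.
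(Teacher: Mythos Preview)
Your proof is correct and follows essentially the same route as the paper: both use the factorisation (\ref{Dexp}) to split $\det D_\theta$ into the product $\det(I_n-\theta(\Phi-i\Psi)\phi(2i\theta\Psi))\cdot\det\re^{-i\theta\Psi}$, reduce the first factor to $\det(I_n-\theta\Delta_\theta)$ via the similarity (\ref{PhiPsiSig}) and the cospectrality of $S^2\Sigma_\theta$ with $S\Sigma_\theta S$, and eliminate the second factor by $\int_\mR\Tr\Psi=2\pi\Tr\Theta=0$. Your treatment is in fact slightly more careful than the paper's in two places: you justify the cospectrality of $S^2\Sigma_\theta$ and $\Delta_\theta$ through the $XY/YX$ characteristic-polynomial identity (covering the case of singular $S$), and you verify positivity of all three determinants so that real logarithms apply without branch issues.
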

\begin{proof}
From the representation (\ref{Dexp}) of the function $D_\theta$ in (\ref{D}), it follows that
\begin{align}
\nonumber
    \int_{\mR}
    \ln\det
    D_\theta(\lambda)
    \rd \lambda
     =&
    \int_{\mR}
    \ln\det
    (
        I_n -\theta (\Phi(\lambda) - i\Psi(\lambda))\phi(2i\theta \Psi(\lambda))
    )
    \rd \lambda\\
\nonumber
    & +    \int_\mR
    \ln\det \re^{-i\theta \Psi(\lambda)}\rd \lambda\\
\nonumber
    = &
    \int_{\mR}
    \ln
    \det
    (
        I_n -
        \theta
        E(i\lambda)S^2\Sigma_\theta(\lambda) E(i\lambda)^{-1}
    )
    \rd \lambda\\
\label{UpsDel1}
    = &
    \int_{\mR}
    \ln
    \det
    (
        I_n -
        \theta
        S^2\Sigma_\theta(\lambda)
    )
    \rd \lambda.
\end{align}
Here, use is made of the relation (\ref{PhiPsiSig}) (along with the isospectrality mentioned in regard to it) and
$$
    \int_\mR
    \ln\det \re^{-i\theta \Psi(\lambda)}\rd \lambda
    =
    -i\theta
    \int_\mR
    \Tr \Psi(\lambda)
    \rd \lambda
    =
    -2\pi i\theta\Tr \Lambda(0) = 0.
$$
The last equality is obtained by taking the trace of the inverse Fourier transform applied to (\ref{Psi0}) as
$$
    \Tr \Lambda(\tau) = \frac{1}{2\pi} \int_\mR \re^{i\lambda \tau} \Tr \Psi(\lambda)\rd \lambda,
    \qquad
    \tau \in \mR,
$$
and using (\ref{Lambda0}) together with the antisymmetry of the CCR matrix $\Theta$ (which makes  it traceless: $\Tr \Theta =0$).  
Since 
the matrix $S^2\Sigma_\theta(\lambda)$ 
is isospectral to
$\Delta_\theta(\lambda)$ in (\ref{Delta}), 
 then
\begin{equation}
\label{detdet}
    \det
    (
        I_n -\theta
        S^2\Sigma_\theta(\lambda)
    )
    =
    \det (I_n - \theta \Delta_\theta(\lambda)).
\end{equation}
In view of (\ref{DD1})--(\ref{DD3}), the condition (\ref{spec1}) is equivalent to $\theta \Delta_\theta(\lambda) \prec I_n$ for all $\lambda \in \mR$, thus making the logarithm  well-defined in application to (\ref{detdet}). Substitution of (\ref{detdet}) into (\ref{UpsDel1}) leads to the representation (\ref{UpsDel}) for the QEF growth rate (\ref{Ups}).
\end{proof}

The representation (\ref{UpsDel}) will be used in Sections~\ref{sec:inf}, \ref{sec:class} in order to relate the quantum QEF growth rate to a similar functional for a classical Gaussian random process.

\section{Infinite cascade spectral factorization}
\label{sec:inf}

The state-space realisations (\ref{Ereal}), (\ref{Esimreal}) of the systems $E$, $E^\sim$ allow $\Psi$ in (\ref{Psi0}) (as a function of $i\lambda$ rather than $\lambda$, with a slight abuse of notation) to be identified with the transfer function of the LCTI system
\begin{equation}
\label{Psireal}
    \Psi
    =
    E\mho E^\sim
    =
    \left[
    \begin{array}{cc|c}
      -A^\rT & 0 & I_n\\
      -\mho & A & 0\\
      \hline
      0 & I_n & 0
    \end{array}
    \right],
\end{equation}
where the matrix $\mho$ is given by (\ref{BJB}). The dynamics matrix of this state-space realisation can be  block diagonalised by the similarity transformation
\begin{align}
\nonumber
  \begin{bmatrix}
    I_n & 0 \\
    \Theta & I_n
  \end{bmatrix}
  \begin{bmatrix}
    -A^\rT & 0\\
    -\mho & A
  \end{bmatrix}
  \begin{bmatrix}
    I_n & 0 \\
    -\Theta & I_n
  \end{bmatrix}
    & =
  \begin{bmatrix}
    -A^\rT & 0\\
    -\Theta A^\rT-\mho & A
  \end{bmatrix}
  \begin{bmatrix}
    I_n & 0 \\
    -\Theta & I_n
  \end{bmatrix}\\
\label{block2}
    & =
  \begin{bmatrix}
    -A^\rT & 0\\
    A \Theta & A
  \end{bmatrix}
  \begin{bmatrix}
    I_n & 0 \\
    -\Theta & I_n
  \end{bmatrix}=
  \begin{bmatrix}
    -A^\rT & 0\\
    0 & A
  \end{bmatrix},
\end{align}
which uses  the CCR matrix $\Theta$ from (\ref{XCCR}) along with the PR condition (\ref{PR1}) and the identity
$$
  \begin{bmatrix}
    I_n & 0 \\
    \alpha & I_n
  \end{bmatrix}
  \begin{bmatrix}
    I_n & 0 \\
    \beta & I_n
  \end{bmatrix}
  =
  \begin{bmatrix}
    I_n & 0 \\
    \alpha + \beta & I_n
  \end{bmatrix},
  \qquad
  \alpha, \beta \in \mC^{n\x n}
$$
(a homomorphism between the multiplicative group of block lower triangular matrices with identity diagonal blocks and $\mC^{n\x n}$ as an additive group). Therefore, (\ref{Psireal})  admits an equivalent state-space realisation:
\begin{equation}
\label{Psireal1}
    \Psi
    =
    \left[
    \begin{array}{c|c}
      a & b\\
      \hline
      c & 0
    \end{array}
    \right],
    \quad
        a
    :=
    \begin{bmatrix}
        -A^\rT & 0\\
  0 & A
    \end{bmatrix},
    \quad
    b:=
    \begin{bmatrix}
      I_n\\
      \Theta
    \end{bmatrix},
    \quad
    c:=
    \begin{bmatrix}
      -\Theta & I_n
    \end{bmatrix}.
\end{equation}
With $\Psi$ being interpreted as the transfer function of a finite-dimensional LCTI system with the strictly proper   state-space realisation (\ref{Psireal}) (or (\ref{Psireal1})),
the matrix $\phi(2i\theta \Psi(\lambda))$ in (\ref{phipos}) corresponds to the transfer  function for an infinite cascade of such systems shown in Fig.~\ref{fig:phi}. 
\begin{figure}[htbp]
\centering
\unitlength=1.15mm
\linethickness{0.5pt}
\begin{picture}(55.00,39.00)
    \put(-15,0){\dashbox(85,37)[cc]{}}
    \put(-14,36){\makebox(0,0)[lt]{{\small$\phi(2i\theta \Psi)$}}}
    \put(11,26){\framebox(8,8)[cc]{\small$\Psi$}}
    \put(31,26){\framebox(8,8)[cc]{\small$\Psi$}}
    \put(51,26){\framebox(8,8)[cc]{\small$\Psi$}}

    \put(11,30){\vector(-1,0){12}}
    \put(51,30){\vector(-1,0){12}}
    \put(31,30){\vector(-1,0){12}}
    \put(75,30){\vector(-1,0){16}}
    \put(65,30){\line(0,-1){25}}
    \put(45,30){\vector(0,-1){8}}
    \put(25,30){\vector(0,-1){8}}
    \put(5,30){\vector(0,-1){8}}
    \put(41,14){\framebox(8,8)[cc]{\small$i\theta$}}
    \put(21,14){\framebox(8,8)[cc]{\small$-\frac{2}{3}\theta^2$}}
    \put(1,14){\framebox(8,8)[cc]{\small$-\frac{i}{3}\theta^3$}}
    \put(65,14){\line(0,-1){9}}
    \put(45,14){\vector(0,-1){6.3}}
    \put(25,14){\vector(0,-1){6.3}}
    \put(5,14){\vector(0,-1){6.3}}
    \put(45,5){\circle{5}}
    \put(25,5){\circle{5}}
    \put(5,5){\circle{5}}
    \put(65,5){\vector(-1,0){17.3}}
    \put(42.3,5){\vector(-1,0){14.5}}
    \put(22.3,5){\vector(-1,0){14.5}}
    \put(-10,5){\vector(-1,0){10}}
    \put(45,5){\makebox(0,0)[cc]{{$+$}}}
    \put(25,5){\makebox(0,0)[cc]{{$+$}}}
    \put(5,5){\makebox(0,0)[cc]{{$+$}}}
    \put(-2,30){\makebox(0,0)[rc]{{$\cdots$}}}
    \put(-2,5){\makebox(0,0)[rc]{{$\cdots$}}}
    \put(77,30){\makebox(0,0)[lc]{{\small$\omega$}}}
    \put(-25,5){\makebox(0,0)[lc]{{\small$\zeta$}}}
\end{picture}
\caption{An infinite cascade of identical LCTI systems with the state-space realization of $\Psi$ in  (\ref{Psireal}), (\ref{Psireal1}) and an external input $\omega$. The sum of their outputs, weighted by $(2i\theta)^k/(k+1)!$  in accordance with the coefficients (\ref{phik0}) of the entire function $\phi$ from (\ref{phi}), forms the output $\zeta$ of the LCTI system $\phi(2i\theta \Psi)$ in (\ref{phipos}).
}
\label{fig:phi}
\end{figure}
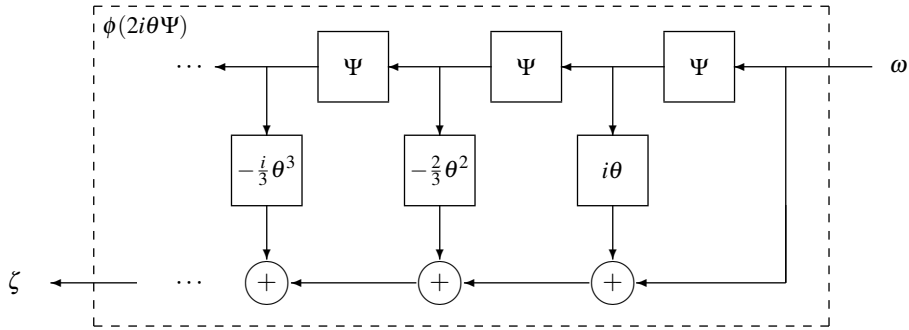 
The resulting LCTI system  has an infinite dimensional state and the state-space realisation 
\begin{equation}
\label{phitrans}
    \phi(z \Psi)
    =
    \left[\begin{array}{cccc|c}
  a   & 0 & 0 &\ldots & b\\
  b c & a & 0 & \ldots & 0\\
  0   & bc & a & \ldots & 0\\
  \ldots    &\ldots   & \ldots  & \ldots & \ldots  \\
  \hline
  \phi_1 z c & \phi_2 z^2 c & \phi_3 z^3c & \cdots & \phi_0 I_n
\end{array}\right],
\qquad
z:= 2i\theta,
\end{equation}
where $a$, $b$, $c$ 
are the state-space matrices of $\Psi$ in (\ref{Psireal1}). In accordance with the cascade structure of these systems,  (\ref{phitrans}) has a  block  two-diagonal   lower triangular dynamics matrix. It will be shown below that the spectral densities $\Sigma_\theta$, $\Delta_\theta$ 
in (\ref{Sigma}), (\ref{Delta}), 
which involve (\ref{phitrans}),  admit an inner-outer  factorization (see, for example, \cite{W_1972}) with a similar infinite cascade structure. To this end, we will need the following two lemmas and a theorem, which develop a ``system transposition'' technique.

\begin{lem}
\label{lem:EOE}
Let $U \in \mR^{n\x n}$ be an  arbitrary matrix  such that the solution $V$ of the ALE
\begin{equation}
\label{ZALE}
    AV + VA^\rT + U = 0,
\end{equation}
associated with the Hurwitz matrix $A$ from (\ref{ABC}), is nonsingular. 
Then the transfer function $E$ in (\ref{Es}) and its system conjugate $E^\sim$ in (\ref{Esim}) satisfy
\begin{equation}
\label{EOE}
    E(s) U E^\sim(s) = V E^\sim(s)  V^{-1}U V^{-1} E(s)V
\end{equation}
for any $s\in \mC$ not belonging to the spectra of $\pm A$.
\hfill$\square$
\end{lem}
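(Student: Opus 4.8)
The plan is to verify the identity (\ref{EOE}) by reducing both of its sides, through the algebraic Lyapunov equation (\ref{ZALE}), to one and the same ``partial fraction'' expression. Write $L := sI_n - A$ and $N := sI_n + A^\rT$, so that $E(s) = L^{-1}$ and $E^\sim(s) = -N^{-1}$ by (\ref{Es}), (\ref{Esim}), both well defined for $s$ outside the spectra of $\pm A$. Besides (\ref{ZALE}) in the form $U = -(AV + VA^\rT)$, the only ingredients are the elementary resolvent identities $L^{-1}A = AL^{-1} = sL^{-1} - I_n$ and $A^\rT N^{-1} = N^{-1}A^\rT = I_n - sN^{-1}$, which follow from $A = sI_n - L$ and $A^\rT = N - sI_n$.

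First I would compute the left-hand side. Substituting $U = -(AV + VA^\rT)$ into $E(s)\, U\, E^\sim(s) = -L^{-1}U N^{-1}$ and expanding with the resolvent identities above, the two terms proportional to $s$ cancel, leaving
\begin{equation*}
    E(s)\, U\, E^\sim(s)
    =
    L^{-1}V - V N^{-1}
    =
    E(s)V + V E^\sim(s).
\end{equation*}
Next I would apply the same manipulation to the right-hand side. Multiplying (\ref{ZALE}) by $V^{-1}$ on both sides shows that $\widehat U := V^{-1}U V^{-1}$ obeys the ``dual'' Lyapunov equation $V^{-1}A + A^\rT V^{-1} + \widehat U = 0$, i.e. $\widehat U = -(V^{-1}A + A^\rT V^{-1})$; this is the ``system transposition'' step. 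Substituting this into $V E^\sim(s)\,\widehat U\, E(s) V = -V N^{-1}\widehat U L^{-1}V$ and again using the resolvent identities, the $s$-dependent cross terms cancel and $E^\sim(s)\,\widehat U\, E(s) = V^{-1}L^{-1} - N^{-1}V^{-1}$, whence the right-hand side of (\ref{EOE}) equals $V(V^{-1}L^{-1} - N^{-1}V^{-1})V = L^{-1}V - VN^{-1}$. This coincides with the value obtained for the left-hand side, proving (\ref{EOE}). The restriction on $s$ is exactly what is needed for $L$ and $N$ to be invertible, and nonsingularity of $V$ is used only in passing to $\widehat U$.

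I do not expect a genuine obstacle here: the statement is the classical Lyapunov-based partial-fraction rearrangement of a product $E\,U\,E^\sim$, applied once to $U$ and once to its $V$-conjugate $\widehat U$. The only thing to watch is the bookkeeping of the non-commuting factors $L^{-1}$, $N^{-1}$, $V$, $V^{-1}$, and in particular checking that the terms carrying a factor of $s$ --- the only terms that could spoil the identity --- cancel on both sides.
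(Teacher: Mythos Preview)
Your proof is correct and follows essentially the same route as the paper's: both arguments use the ALE (\ref{ZALE}) to reduce $E(s)UE^\sim(s)$ to the partial-fraction form $E(s)V + VE^\sim(s)$, and then invoke the dual relation $V^{-1}A + A^\rT V^{-1} + V^{-1}UV^{-1} = 0$ to connect this with the right-hand side. The only cosmetic difference is that the paper runs the computation as a single chain from left to right (factoring $E(s)V + VE^\sim(s) = VE^\sim(s)(V^{-1}E(s)^{-1} + (E^\sim(s))^{-1}V^{-1})E(s)V$ and then simplifying the middle), whereas you reduce each side separately to the common expression $L^{-1}V - VN^{-1}$; the algebraic content is identical.
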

\begin{proof}
By using the ALE (\ref{ZALE}) and omitting the argument $s \in \mC$ (with values beyond the spectra of $\pm A$)  of the transfer functions $E$, $E^\sim$ for brevity, it follows that
\begin{align}
\nonumber
    E U  E^\sim
    & =
    -E(A V  +  V  A^\rT)E^\sim      \\
\nonumber
    & =
    -E((A-sI_n) V  +  V  (A^\rT+sI_n))E^\sim    \\
\nonumber
    & =  V  E^\sim +  E V     \\
\nonumber
    & =  V  E^\sim( V ^{-1}E^{-1} +  (E^\sim)^{-1} V ^{-1})E V     \\
\nonumber
    & =  V  E^\sim( V ^{-1}(sI_n-A) -  (sI_n+A^\rT) V ^{-1})E V     \\
\nonumber
    & = - V  E^\sim( V ^{-1}A +  A^\rT  V ^{-1})E V     \\
\label{EUE}
    & =  V  E^\sim   V ^{-1} U   V ^{-1} E V  ,
\end{align}
which establishes (\ref{EOE}).
The last equality in (\ref{EUE}) employs the fact that, under the condition $\det  V  \ne 0$,   the ALE (\ref{ZALE}) is equivalent to an algebraic  Riccati equation (ARE) 
$$
     V ^{-1}A + A^\rT  V ^{-1}  +  V ^{-1}U V ^{-1} = 0
$$
obtained by the left and right multiplication of (\ref{ZALE}) by $ V ^{-1}$.
\end{proof}

The relation (\ref{EOE}) is equivalent to
$$
    V^{-1}E U E^\sim V^{-1} - E^\sim  V^{-1}U V^{-1} E = 0,
$$
which can be interpreted as commutativity of $V^{-1}E $ and $E^\sim V^{-1}$ ``through'' the matrix $U$.  Also, an alternative way to obtain (\ref{EOE}) is to use the property that the dynamics matrix of
$$
    E U   E^\sim
    =
    \left[
    \begin{array}{cc|c}
      -A^\rT & 0 & I_n\\
      - U   & A & 0\\
      \hline
      0 & I_n & 0
    \end{array}
    \right]
$$
is block diagonalisable as
\begin{align*}
  \begin{bmatrix}
    I_n & 0 \\
     V   & I_n
  \end{bmatrix}
  \begin{bmatrix}
    -A^\rT & 0\\
    - U   & A
  \end{bmatrix}
  \begin{bmatrix}
    I_n & 0 \\
    - V   & I_n
  \end{bmatrix}
    & =
  \begin{bmatrix}
    -A^\rT & 0\\
    - V   A^\rT- U   & A
  \end{bmatrix}
  \begin{bmatrix}
    I_n & 0 \\
    - V   & I_n
  \end{bmatrix}\\
    & =
  \begin{bmatrix}
    -A^\rT & 0\\
    A  V   & A
  \end{bmatrix}
  \begin{bmatrix}
    I_n & 0 \\
    - V   & I_n
  \end{bmatrix}=
  \begin{bmatrix}
    -A^\rT & 0\\
    0 & A
  \end{bmatrix}
\end{align*}
similarly to (\ref{block2}),
and hence,
$$
        E U   E^\sim
        =
        \left[
    \begin{array}{cc|c}
      -A^\rT & 0 & I_n\\
      0 & A &  V  \\
      \hline
      - V   & I_n & 0
    \end{array}
    \right]
    =
        \left[
    \begin{array}{cc|c}
      A & 0 &  V  \\
      0 & -A^\rT & I_n\\
      \hline
      I_n & - V   & 0
    \end{array}
    \right],
$$
where the second equality is obtained by swapping the decoupled subsystems $(A,V,I_n)$ and $(-A^\rT,I_n,-V)$ (without affecting their  sum as linear operators with a common input), which leads to the right-hand side of (\ref{EOE}).

In application of Lemma~\ref{lem:EOE} to the matrix $U:= \mho$ from (\ref{BJB}), the ALE (\ref{ZALE}) coincides with the PR condition (\ref{PR1}) and yields
the CCR matrix $V:= \Theta$ in (\ref{XCCR}) which is assumed to be nonsingular. In this case, the relation  (\ref{EOE}) allows the factors in (\ref{Psireal}) to be rearranged as
\begin{equation}
\label{EmhoE}
  E \mho E^\sim = \Theta E^\sim  \Theta^{-1}\mho \Theta^{-1} E\Theta .
\end{equation}
In the representation (\ref{EOE}) and its particular case (\ref{EmhoE}), the factors $E$ are moved to the right, while their duals $E^\sim$ are moved to the left. This resembles the Wick ordering \cite{W_1950} (see also \cite[pp. 209--210]{J_1997}) for mixed products of quantum mechanical annihilation and creation operators. Theorem~\ref{th:EOEk} below extends (\ref{EmhoE}) to arbitrary positive integer powers of $E \mho E^\sim$. Its formulation employs three sequences of matrices $\alpha_j, \beta_j, \gamma_j  \in \mR^{n\x n}$ computed recursively as
\begin{equation}
\label{alfbetgamnext}
    \alpha_{j+1} = \gamma_j \beta_j,
    \quad
    \beta_{j+1} = \gamma_j^{-1} \alpha_j \gamma_{j-1}\gamma_j^{-1},
    \quad
    \gamma_j= \cL_A(\alpha_j \gamma_{j-1}),
    \qquad
    j\> 1
\end{equation}
(where the operator $\cL_A$ is given by (\ref{cL}))
with the initial conditions
\begin{equation}
\label{alfbetgam0}
        \alpha_1
    =
    \gamma_0
    =
    \cL_A(\mho)
    =
    \Theta,
    \qquad
    \beta_1 = \Theta^{-1}\mho \Theta^{-1}.
\end{equation}
It is convenient to extend the first equality in (\ref{alfbetgamnext}) to $j=0$ as $\alpha_1 = \gamma_0 \beta_0$ by letting
\begin{equation}
\label{bet0}
    \beta_0:= \gamma_0^{-1} \alpha_1 = I_n,
\end{equation}
 in accordance with the first equality from (\ref{alfbetgam0}). Also, it is assumed that the matrices $\gamma_j$ in (\ref{alfbetgamnext}) are nonsingular:
 \begin{equation}
 \label{gamdet}
   \det \gamma_j \ne 0,
   \qquad
   j\> 1.
 \end{equation}
The following lemma provides relevant properties of these matrices which will be used in the  proof of Theorem~\ref{th:EOEk}.

\begin{lem}
\label{lem:alfbetgam}
The matrices $\beta_j$, $\gamma_j$, defined by (\ref{alfbetgamnext})--(\ref{bet0}) subject to (\ref{gamdet}),   have the opposite symmetric properties:
\begin{equation}
\label{betgam+-}
    \beta_j^\rT = (-1)^j \beta_j,
    \qquad
    \gamma_j^\rT = -(-1)^j \gamma_j,
    \qquad
    j \> 0.
\end{equation}
Furthermore, the matrices $\beta_j$ with even $j$ have an alternating definiteness in the sense that
\begin{equation}
\label{betpos}
    (-1)^r\beta_{2r} \succ 0,
    \qquad
    r \> 0.
\end{equation}
\hfill$\square$
\end{lem}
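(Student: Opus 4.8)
The plan is to establish the symmetry relations (\ref{betgam+-}) first, by induction on $j$, and then to deduce the alternating definiteness (\ref{betpos}) from a signed congruence that these relations make available.

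First I would record two facts. From (\ref{cL}) the operator $\cL_A$ commutes with transposition, $\cL_A(V)^\rT=\cL_A(V^\rT)$. Also, substituting $\alpha_j=\gamma_{j-1}\beta_{j-1}$ (the first relation in (\ref{alfbetgamnext}) at $j-1$, valid for $j\>1$ and, via (\ref{bet0}), also for $j=1$) into the remaining relations of (\ref{alfbetgamnext}) rewrites the recursion as
\[
    \gamma_j=\cL_A\big(\gamma_{j-1}\beta_{j-1}\gamma_{j-1}\big),
    \qquad
    \beta_{j+1}=\gamma_j^{-1}\gamma_{j-1}\beta_{j-1}\gamma_{j-1}\gamma_j^{-1},
    \qquad
    j\>1 .
\]
For the base of the induction, (\ref{alfbetgam0}), (\ref{bet0}) together with $\Theta^\rT=-\Theta$, $\mho^\rT=-\mho$ give $\beta_0=I_n$, $\gamma_0=\Theta$, $\beta_1=\Theta^{-1}\mho\Theta^{-1}$ and $\gamma_1=\cL_A(\Theta^2)$, which obey (\ref{betgam+-}) for $j=0,1$ because $(\Theta^2)^\rT=(\Theta^\rT)^2=\Theta^2$. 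For the inductive step, assume (\ref{betgam+-}) for all indices $\<j$ with some $j\>1$; then, by the induction hypothesis, transposition multiplies each $\gamma_k$ and $\gamma_k^{-1}$ by $-(-1)^k$ and each $\beta_k$ by $(-1)^k$ for $k\<j$. Transposing the first identity above at index $j+1$ and the second at index $j$, the accumulated scalar factors are $(-(-1)^j)(-1)^j(-(-1)^j)=(-1)^j$ and $(-(-1)^j)(-1)^j(-(-1)^j)(-1)^j(-(-1)^j)=(-1)^{j+1}$, whence $\gamma_{j+1}^\rT=(-1)^j\gamma_{j+1}=-(-1)^{j+1}\gamma_{j+1}$ and $\beta_{j+1}^\rT=(-1)^{j+1}\beta_{j+1}$. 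This closes the induction and establishes (\ref{betgam+-}).

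Next I would extract the key congruence. For $j\>1$, (\ref{betgam+-}) yields $(\gamma_j^{-1}\gamma_{j-1})^\rT=\gamma_{j-1}^\rT\gamma_j^{-\rT}=-\gamma_{j-1}\gamma_j^{-1}$, the product of the scalars $-(-1)^{j-1}$ and $-(-1)^j$ being $-1$; hence the second identity above reads
\[
    \beta_{j+1}=-N_j\,\beta_{j-1}\,N_j^\rT,
    \qquad
    N_j:=\gamma_j^{-1}\gamma_{j-1},
\]
with $N_j$ nonsingular because $\gamma_0=\Theta$ is nonsingular and $\gamma_k$ is nonsingular for $k\>1$ by (\ref{gamdet}). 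Taking $j=2r-1$ and multiplying by $(-1)^r$ gives
\[
    (-1)^r\beta_{2r}
    =(-1)^{r+1}N_{2r-1}\beta_{2r-2}N_{2r-1}^\rT
    =N_{2r-1}\big((-1)^{r-1}\beta_{2(r-1)}\big)N_{2r-1}^\rT .
\]
Since congruence by a nonsingular matrix preserves positive definiteness, (\ref{betpos}) follows by induction on $r$, the base case $r=0$ being $\beta_0=I_n\succ0$.

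The calculations are elementary; the single load-bearing point is that consecutive matrices $\gamma_{j-1}$ and $\gamma_j$ carry opposite symmetry, which is precisely what forces the minus sign in $\beta_{j+1}=-N_j\beta_{j-1}N_j^\rT$ and thereby makes $(-1)^r\beta_{2r}$ positive definite along the whole even subsequence. I do not anticipate any obstacle beyond keeping the parities straight.
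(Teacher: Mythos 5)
Your proof is correct and follows essentially the same route as the paper: you rewrite the recursion as $\gamma_j=\cL_A(\gamma_{j-1}\beta_{j-1}\gamma_{j-1})$, $\beta_{j+1}=\gamma_j^{-1}\gamma_{j-1}\beta_{j-1}\gamma_{j-1}\gamma_j^{-1}$, prove (\ref{betgam+-}) by a parity induction using the commutation of $\cL_A$ with transposition, and then obtain (\ref{betpos}) from the congruence $\beta_{j+1}=-N_j\beta_{j-1}N_j^\rT$ with $N_j=\gamma_j^{-1}\gamma_{j-1}$ nonsingular, exactly as in the paper's argument (there stated with the index shift $(-1)^{r+1}\beta_{2r+2}=\gamma_{2r+1}^{-1}\gamma_{2r}\,(-1)^r\beta_{2r}\,(\gamma_{2r+1}^{-1}\gamma_{2r})^\rT$). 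The only difference is your more explicit sign bookkeeping, which is fine.
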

\begin{proof}
From the recurrence relations (\ref{alfbetgamnext}), it follows  that
\begin{equation}
\label{betgamnext}
    \beta_{j+1} = \gamma_j^{-1} \gamma_{j-1} \beta_{j-1} \gamma_{j-1}\gamma_j^{-1},
    \qquad
    \gamma_j= \cL_A(\gamma_{j-1} \beta_{j-1}  \gamma_{j-1}),
    \qquad
    j\> 1,
\end{equation}
which, in view of the initial conditions (\ref{alfbetgam0}), (\ref{bet0}),  implies by induction that the matrices $\beta_{2r}$, $\gamma_{2r+1}$ are symmetric, while $\beta_{2r+1}$, $\gamma_{2r}$ are antisymmetric for all $r = 0,1,2,\ldots$, thus establishing (\ref{betgam+-}).  Here, use is made of the commutativity between the operator $\cL_A$ and the matrix transpose ($\cL_A(\beta^\rT) = \cL_A(\beta)^\rT$ for any $\beta \in \mR^{n\x n}$),  whereby the subspaces $\mS_n$, $\mA_n$ of real symmetric and real antisymmetric matrices of order $n$ are invariant under $\cL_A$ in (\ref{cL}):
\begin{equation}
\label{incs}
    \cL_A(\mS_n) \subset \mS_n,
    \qquad
    \cL_A(\mA_n) \subset \mA_n.
\end{equation}
Now, the validity of (\ref{betpos}) at $r=0$ is secured by (\ref{bet0}). From the second relation in (\ref{betgam+-}), it follows that $(\gamma_j^{-1} \gamma_{j-1})^\rT =
-\gamma_{j-1}\gamma_j^{-1}$ for any $j$, whereby
(\ref{betgamnext}) implies that
$$
    (-1)^{r+1}\beta_{2r+2} = \gamma_{2r+1}^{-1} \gamma_{2r} (-1)^r\beta_{2r} (\gamma_{2r+1}^{-1} \gamma_{2r} )^\rT
    \succ 0,
$$
provided (\ref{betpos}) is already proved for some $r\> 0$.
Hence, by induction, (\ref{betpos}) holds for any $r\> 0$.
\end{proof}

Note that (\ref{betpos}) is closely related to the ``propagation'' of nonsingularity (\ref{gamdet}) over odd values of $j$. More precisely, in view of (\ref{betgam+-}), by applying the second equality from (\ref{betgamnext}) to $j=2r+1$, it follows that a combination of (\ref{betpos}) with  $\det \gamma_{2r} \ne 0$ leads to
\begin{equation}
\label{prop}
    -(-1)^r\gamma_{2r+1}=
    \cL_A(\gamma_{2r} (-1)^r\beta_{2r}  \gamma_{2r}^\rT) \succ 0,
\end{equation}
and hence, $\det \gamma_{2r+1}\ne 0$. Here, use is also  made of the inclusion (due to $A$ being Hurwitz)
\begin{equation}
\label{inc1}
    \cL_A(\mP_n) \subset \mP_n,
\end{equation}
which complements (\ref{incs}),
with $\mP_n$ the set of real positive definite symmetric matrices of order $n$. However, the inequality (\ref{prop}), which  is based on the definiteness argument using (\ref{inc1}), does not extend to even values of $j$, thus explaining the need in the assumption (\ref{gamdet}).

\begin{thm}
\label{th:EOEk}
The system $E$ in (\ref{Ereal}) and  its dual $E^\sim$ in (\ref{Esimreal}), associated with the Hurwitz matrix $A$,   satisfy
\begin{equation}
\label{EOEk}
    (E\mho E^\sim)^k
    =
    (-1)^k
    \rprod_{j=1}^k
    (\alpha_j^\rT E^\sim)
    \beta_k
    \lprod_{j=1}^k
    (E \alpha_j),
    \qquad
    k \> 1,
\end{equation}
where $\mho$ is the matrix from  (\ref{BJB}), and $\rprod(\cdot)$, $\lprod(\cdot)$ are the rightwards and leftwards  ordered products, respectively. Here,
the matrices $\alpha_j, \beta_j \in \mR^{n\x n}$ are given by (\ref{alfbetgamnext}), (\ref{alfbetgam0}) subject to the condition (\ref{gamdet}). \hfill$\square$ 
\end{thm}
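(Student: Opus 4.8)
The plan is to prove (\ref{EOEk}) by induction on $k$, using the single ``transposition'' identity of Lemma~\ref{lem:EOE} as the elementary move and the symmetries (\ref{betgam+-}) of Lemma~\ref{lem:alfbetgam} at the end to turn the auxiliary matrices $\gamma_j$ produced by these moves back into the factors $\alpha_j^\rT$ and $\alpha_j$ appearing in (\ref{EOEk}). All the identities below are between rational matrix functions of $s$ valid off the (finite) spectra of $\pm A$, so composing finitely many instances of Lemma~\ref{lem:EOE} is legitimate. For the base case $k=1$, the particular case (\ref{EmhoE}) together with the initial data (\ref{alfbetgam0}) yields the ``normal-ordered'' form of a single factor,
\[
    E\mho E^\sim
    =
    \Theta E^\sim(\Theta^{-1}\mho\Theta^{-1})E\Theta
    =
    \alpha_1 E^\sim\beta_1 E\alpha_1,
\]
and since $\alpha_1=\Theta=-\Theta^\rT=-\alpha_1^\rT$, this is precisely (\ref{EOEk}) at $k=1$, the sign $(-1)^1$ being absorbed into $\alpha_1=-\alpha_1^\rT$.

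For the inductive step, assume (\ref{EOEk}) holds for some $k\ge1$ and multiply it on the right by $E\mho E^\sim=\alpha_1E^\sim\beta_1E\alpha_1$; the rightmost block $E\alpha_1$ of $\lprod_{j=1}^k(E\alpha_j)$ then abuts the leftmost block $\alpha_1E^\sim$, producing $E\alpha_1\alpha_1E^\sim=E(\alpha_1\gamma_0)E^\sim$ (recall $\alpha_1=\gamma_0$, so $\alpha_1^2=\alpha_1\gamma_0$). The core of the argument is a ``bubbling'' sub-claim: for $1\le p\le k$,
\[
    \Big(\lprod_{j=1}^{p}(E\alpha_j)\Big)(\alpha_1E^\sim)
    =
    \gamma_pE^\sim\,\beta_{p+1}\,E\gamma_p\beta_p\,E\gamma_{p-1}\beta_{p-1}\cdots E\gamma_2\beta_2\,E\gamma_1
\]
(the product of blocks $E\gamma_j\beta_j$ running over $j=p,p-1,\dots,2$ and terminating with $E\gamma_1$; for $p=1$ the right-hand side is just $\gamma_1E^\sim\beta_2E\gamma_1$). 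This is proved by a nested induction on $p$: peeling off the leftmost factor $E\alpha_p$ and inserting the sub-claim for $p-1$ turns the left end into $(E\alpha_p\gamma_{p-1}E^\sim)\beta_pE\gamma_{p-1}\cdots$, whereupon Lemma~\ref{lem:EOE} applied with $U:=\alpha_p\gamma_{p-1}$ and $V:=\gamma_p$ rewrites $E\alpha_p\gamma_{p-1}E^\sim=\gamma_pE^\sim(\gamma_p^{-1}\alpha_p\gamma_{p-1}\gamma_p^{-1})E\gamma_p=\gamma_pE^\sim\beta_{p+1}E\gamma_p$; here the recursion (\ref{alfbetgamnext}) both identifies $\gamma_p=\cL_A(\alpha_p\gamma_{p-1})$ as the solution of the required ALE $A\gamma_p+\gamma_pA^\rT+\alpha_p\gamma_{p-1}=0$ and gives $\gamma_p^{-1}\alpha_p\gamma_{p-1}\gamma_p^{-1}=\beta_{p+1}$, while the standing assumption (\ref{gamdet}) guarantees $\det\gamma_p\ne0$.

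Taking $p=k$ and restoring the surrounding factors $(-1)^k\rprod_{j=1}^k(\alpha_j^\rT E^\sim)\beta_k$ on the left and $\beta_1E\alpha_1$ on the right gives
\[
    (E\mho E^\sim)^{k+1}
    =
    (-1)^k\rprod_{j=1}^{k}(\alpha_j^\rT E^\sim)\,(\beta_k\gamma_k)\,E^\sim\,\beta_{k+1}\,E\gamma_k\beta_k\,E\gamma_{k-1}\beta_{k-1}\cdots E\gamma_1\beta_1\,E\alpha_1,
\]
and it remains to pass from $\gamma$'s back to $\alpha$'s via (\ref{betgam+-}). From $\beta_k^\rT=(-1)^k\beta_k$ and $\gamma_k^\rT=-(-1)^k\gamma_k$ one gets $(\beta_k\gamma_k)^\rT=-\gamma_k\beta_k=-\alpha_{k+1}$, i.e. $\beta_k\gamma_k=-\alpha_{k+1}^\rT$, so the leading part becomes $(-1)^{k+1}\rprod_{j=1}^{k+1}(\alpha_j^\rT E^\sim)$; and since $\gamma_j\beta_j=\alpha_{j+1}$ from (\ref{alfbetgamnext}), the trailing block $E\gamma_k\beta_k\cdots E\gamma_1\beta_1\,E\alpha_1$ collapses to $\lprod_{j=1}^{k+1}(E\alpha_j)$, leaving $\beta_{k+1}$ in the middle; this is exactly (\ref{EOEk}) at $k+1$. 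I expect the main obstacle to be the bookkeeping in the nested ``bubbling'' induction — keeping the alternating $\gamma_j,\beta_j$ pattern straight and checking the hypotheses of Lemma~\ref{lem:EOE} (that $\gamma_p$ solves the stated ALE and is nonsingular by (\ref{gamdet})) at each of the $k$ steps — whereas the base case and the closing $\gamma\mapsto\alpha$ conversion are routine.
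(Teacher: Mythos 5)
Your proposal is correct and follows essentially the same route as the paper's proof: a nested induction whose elementary move is Lemma~\ref{lem:EOE} applied with $U:=\alpha_p\gamma_{p-1}$, $V:=\gamma_p=\cL_A(\alpha_p\gamma_{p-1})$ (nonsingular by (\ref{gamdet})), followed by the symmetry relations (\ref{betgam+-}) to identify $\beta_k\gamma_k=-\alpha_{k+1}^\rT$. The only difference is cosmetic bookkeeping: your ``bubbling'' sub-claim keeps the intermediate blocks in the form $E\gamma_j\beta_j$ and converts them to $E\alpha_{j+1}$ only at the end, whereas the paper's inner induction (\ref{inner}) performs the substitution $\gamma_j\beta_j=\alpha_{j+1}$ at each step.
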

\begin{proof}
We will use a nested induction over $k\>1$ (the outer layer of induction) and $j= 1, \ldots, k-1$ (the inner layer, which is inactive at $k=1$).
The validity of (\ref{EOEk}) at $k=1$,  that is,
\begin{equation}
\label{EOE1}
  E\mho E^\sim
  =
  -\alpha_1^\rT E^\sim \beta_1 E \alpha_1
  =
  \gamma_0 E^\sim \beta_1 E \alpha_1
\end{equation}
with the matrices $\alpha_1$, $\beta_1$ given by (\ref{alfbetgam0}), follows from (\ref{EmhoE}) in view of the antisymmetry and nonsingularity of the CCR matrix $\Theta$ from (\ref{XCCR}). In order to demonstrate  the structure of the induction steps, consider the left-hand side of (\ref{EOEk}) for $k=2$:
\begin{align}
\nonumber
  (E \mho E^\sim)^2
  & =
  E \mho E^\sim E\mho E^\sim\\
\nonumber
  & = -\alpha_1^\rT E^\sim \beta_1 E \alpha_1 E\mho E^\sim\\
\nonumber
  & = -\alpha_1^\rT E^\sim \beta_1 E \alpha_1 \gamma_0 E^\sim\beta_1 E\alpha_1\\
\label{EOE2}
  & =
  -\alpha_1^\rT
  E^\sim
  \underbrace{\beta_1 \gamma_1}_{-\alpha_2^\rT}
  E^\sim
  \underbrace{\gamma_1^{-1} \alpha_1 \gamma_0 \gamma_1^{-1}}_{\beta_2}
  E
  \underbrace{\gamma_1\beta_1}_{\alpha_2}
  E
  \alpha_1,
\end{align}
where (\ref{alfbetgamnext}) is used for $j=1$ along with (\ref{alfbetgam0}), (\ref{EOE1}),   the antisymmetry of $\beta_1$ and symmetry of
$$
    \gamma_1 = \cL_A(\alpha_1\gamma_0) = \cL_A(\Theta^2).
$$
The last two equalities in (\ref{EOE2}) involve repeated application of (\ref{EOE}) from Lemma~\ref{lem:EOE}. This allows the rightmost $E^\sim$ factor to be ``pulled'' through the product of the $E$ factors (and constant matrices between them) until the $E^\sim$ factor is to the left of all the $E$ factors. The last equality in (\ref{EOE2}) also uses the symmetric properties (\ref{betgam+-}) and  establishes (\ref{EOEk}) for $k=2$. Now, suppose the representation (\ref{EOEk}) is already proved for some $k \> 2$. Then the next power of the matrix $E\mho E^\sim $ takes the form
\begin{equation}
\label{EOEnext}
    (E\mho E^\sim )^{k+1}
    =
    (-1)^k
    \rprod_{j=1}^k
    (\alpha_j^\rT E^\sim )
    \beta_k
    \lprod_{j=1}^k
    (E \alpha_j)
    E\mho E^\sim ,
\end{equation}
where the rightmost $E^\sim$ factor is the only $E^\sim$ factor which is to the right of the $E$ factors. We will now use the pulling procedure, demonstrated in (\ref{EOE2}),  and prove that
\begin{equation}
\label{inner}
    \beta_k
    \lprod_{j=1}^k
    (E \alpha_j)
    E\mho E^\sim
    =
    \beta_k
    \lprod_{j=r}^k
    (E \alpha_j)
    \gamma_{r-1}
    E^\sim
    \beta_r
    \lprod_{j=1}^r
    (E \alpha_j)
\end{equation}
by induction over $r= 1, \ldots, k$.  The fulfillment of (\ref{inner}) for $r=1$ is verified by applying (\ref{EmhoE}) and using (\ref{alfbetgam0}):
\begin{align*}
    \beta_k
    \lprod_{j=1}^k
    (E \alpha_j)
    E\mho E^\sim
    & =
    \beta_k
    \lprod_{j=1}^k
    (E \alpha_j)
    E\mho E^\sim\\
    & =
    \beta_k
    \lprod_{j=1}^k
    (E \alpha_j)
    \Theta
    E^\sim
    \Theta^{-1} \mho \Theta^{-1}
    E
    \Theta\\
    & =
    \beta_k
    \lprod_{j=1}^k
    (E \alpha_j)
    \gamma_0
    E^\sim
    \beta_1
    E
    \alpha_1.
\end{align*}
Now, suppose (\ref{inner}) is already proved for some $r= 1, \ldots, k-1$. Then its validity for the next value $r+1$ is established by using (\ref{EOE}) of Lemma~\ref{lem:EOE} in combination with (\ref{alfbetgamnext}) as
\begin{align*}
    \beta_k
    \lprod_{j=1}^k
    (E \alpha_j)
    E\mho E^\sim
    & =
    \beta_k
    \lprod_{j=r+1}^k
    (E \alpha_j)
    E \alpha_r
    \gamma_{r-1}
    E^\sim
    \beta_r
    \lprod_{j=1}^r
    (E \alpha_j)\\
    & =
    \beta_k
    \lprod_{j=r+1}^k
    (E \alpha_j)
    \gamma_r
    E^\sim
    \underbrace{\gamma_r^{-1}\alpha_r
    \gamma_{r-1}
    \gamma_r^{-1}}_{\beta_{r+1}}
    E
    \underbrace{\gamma_r
    \beta_r}_{\alpha_{r+1}}
    \lprod_{j=1}^r
    (E \alpha_j)\\
    & =
    \beta_k
    \lprod_{j=r+1}^k
    (E \alpha_j)
    \gamma_r
    E^\sim
    \beta_{r+1}
    \lprod_{j=1}^{r+1}
    (E \alpha_j).
\end{align*}
Therefore, (\ref{inner}) holds for any $r = 1, \ldots, k$, thus completing the inner layer of induction. In particular, at $r=k$, this relation takes the form
\begin{align}
\nonumber
    \beta_k
    \lprod_{j=1}^k
    (E \alpha_j)
    E\mho E^\sim
    & =
    \beta_k
    E
    \alpha_k\gamma_{k-1}
    E^\sim
    \beta_k
    \lprod_{j=1}^k
    (E \alpha_j)\\
\nonumber
    & =
    \underbrace{\beta_k
    \gamma_k}_{-\alpha_{k+1}^\rT}
    E^\sim
    \underbrace{\gamma_k^{-1}
    \alpha_k\gamma_{k-1}
    \gamma_k^{-1}}_{\beta_{k+1}}
    E
    \underbrace{\gamma_k
    \beta_k}_{\alpha_{k+1}}
    \lprod_{j=1}^k
    (E \alpha_j)\\
\label{innerlast}
    & =
    -\alpha_{k+1}^\rT
    E^\sim
    \beta_{k+1}
    \lprod_{j=1}^{k+1}
    (E \alpha_j),
\end{align}
where (\ref{EOE}) of Lemma~\ref{lem:EOE} and (\ref{alfbetgamnext}) are used again along with $\alpha_{k+1}^\rT = \beta_k^\rT \gamma_k^\rT = -\beta_k\gamma_k$ in view of (\ref{betgam+-}). Now, substitution of (\ref{innerlast}) into (\ref{EOEnext}) leads to
\begin{align*}
    (E\mho E^\sim )^{k+1}
    & =
    -(-1)^k
    \rprod_{j=1}^k
    (\alpha_j^\rT E^\sim )
    \alpha_{k+1}^\rT
    E^\sim
    \beta_{k+1}
    \lprod_{j=1}^{k+1}
    (E \alpha_j)\\
    & =
    (-1)^{k+1}
    \rprod_{j=1}^{k+1}
    (\alpha_j^\rT E^\sim )
    \beta_{k+1}
    \lprod_{j=1}^{k+1}
    (E \alpha_j),
\end{align*}
which completes the outer layer of induction, thus proving (\ref{EOEk}) for any $k\>1$.
\end{proof}

For any $\sigma_1, \sigma_2, \sigma_3, \ldots \in \mR\setminus\{0\}$,   the factorisation (\ref{EOEk}) is invariant under the transformation
$$
    \alpha_k \mapsto \frac{1}{\sigma_k} \alpha_k,
    \qquad
    \beta_k \mapsto \beta_k \prod_{j=1}^k \sigma_j^2,
    \qquad
    k \> 1,
$$
which can be used for balancing the state-space realisations discussed below.
The following theorem employs an extension of Theorem~\ref{th:EOEk} from monomials to entire functions of  $E\mho E^\sim$ along with the duality
\begin{equation}
\label{EEEdual}
        \rprod_{j=1}^k
    (\alpha_j^\rT E^\sim)
    =
    \Big(\lprod_{j=1}^k
    (E \alpha_j)\Big)^\sim.
\end{equation}
To this end,
the system $E$ and the sequence of matrices $\alpha_k$  give rise to strictly proper LCTI systems
\begin{equation}
\label{Gk}
    G_k :=
    \lprod_{j=1}^k
    (E \alpha_j)
    E
    =
    \lprod_{j=0}^k
    (E \alpha_j),
    \qquad
    k = 0,1,2,\ldots
\end{equation}
(with the second equality using the convention that $\alpha_0=I_n$), which are assembled into
\begin{equation}
\label{G}
    G
    :=
    \begin{bmatrix}
      G_0\\
      G_1\\
      G_2\\
      \vdots
    \end{bmatrix}
    =
    \begin{bmatrix}
      E\\
      E \alpha_1 E\\
      E \alpha_2 E \alpha_1 E\\
      \vdots
    \end{bmatrix}
    =
    \left[\begin{array}{cccc|c}
  A   & 0 & 0 &\ldots & I_n\\
  \alpha_1  & A & 0 & \ldots  & 0\\
  0   & \alpha_2 & A & \ldots  & 0\\
  \ldots    &\ldots   & \ldots  & \ldots  &\ldots  \\
  \hline
  I_n & 0 & 0 & \ldots  & 0\\
  0 & I_n & 0 & \ldots  & 0\\
  0 & 0 & I_n & \ldots  & 0\\
  \ldots & \ldots & \ldots & \ldots  & \ldots
\end{array}\right],
\end{equation}
provided the condition (\ref{gamdet}) is satisfied.
The system $G$ has real state-space matrices  (its output matrix is the infinite identity matrix $I_\infty$),
an $\mR^n$-valued input and an $\mR^\infty $-valued output which coincides with the internal state. This system is organised as an infinite cascade of LCTI systems shown in Fig.~\ref{fig:EEE}. 
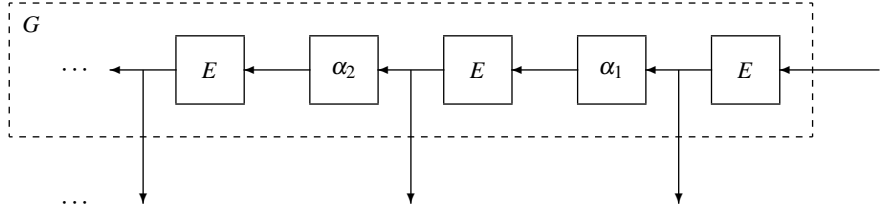
\begin{figure}[htbp]
\centering
\unitlength=0.89mm
\linethickness{0.5pt}
\begin{picture}(140.00,36.00)
    \put(5,10){\dashbox(120,20)[cc]{}}
    \put(7,28){\makebox(0,0)[lt]{{\small$G$}}}
    \put(110,15){\framebox(10,10)[cc]{\small$E$}}
    \put(90,15){\framebox(10,10)[cc]{\small$\alpha_1$}}
    \put(70,15){\framebox(10,10)[cc]{\small$E$}}
    \put(50,15){\framebox(10,10)[cc]{\small$\alpha_2$}}
    \put(30,15){\framebox(10,10)[cc]{\small$E$}}
    \put(135,20){\vector(-1,0){15}}
    \put(110,20){\vector(-1,0){10}}
    \put(90,20){\vector(-1,0){10}}
    \put(70,20){\vector(-1,0){10}}
    \put(50,20){\vector(-1,0){10}}
    \put(30,20){\vector(-1,0){10}}
    \put(15,20){\makebox(0,0)[cc]{{$\cdots$}}}
    \put(15,0){\makebox(0,0)[cc]{{$\cdots$}}}

    \put(105,20){\vector(0,-1){20}}
    \put(65,20){\vector(0,-1){20}}
    \put(25,20){\vector(0,-1){20}}
\end{picture}
\caption{An infinite cascade of copies of the LCTI system $E$ from (\ref{Ereal}) (with the matrices $\alpha_k$ from (\ref{alfbetgamnext}), (\ref{alfbetgam0}) as intermediate factors)  forming the system $G$ with the state-space realization (\ref{G}).
}
\label{fig:EEE}
\end{figure} 
We will also use an infinite block-diagonal complex Hermitian matrix
\begin{equation}
\label{H}
    H_\theta
    :=
    \diag_{k\> 0}
        ((-2i\theta)^k
    \phi_k \beta_k),
\end{equation}
defined in terms of the coefficients $\phi_k$ in  (\ref{phik0}) and the matrices $\beta_k$ from Lemma~\ref{lem:alfbetgam}.

\begin{thm}
\label{th:Delta}
Under the condition (\ref{gamdet}), the spectral density $\Delta_\theta$ in  (\ref{Delta}) can be  factorised in terms of the transfer function of the system $G$ from  (\ref{G}), the matrix $H_\theta$ in  (\ref{H}) and the square root $S$ in (\ref{BOBroot}) as
\begin{equation}
\label{DelGH}
  \Delta_\theta(\lambda)
  =
  SG(i\lambda)^* H_\theta G(i\lambda)S,
  \qquad
  \lambda \in \mR.
\end{equation}
\hfill$\square$
\end{thm}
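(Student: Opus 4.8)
The plan is to work entirely in the frequency domain, evaluating all transfer functions pointwise at $s = i\lambda$ and using the fact that, since $A$ is real, the system conjugate coincides there with the complex conjugate transpose: $E^\sim(i\lambda) = E(i\lambda)^*$, and hence $\Psi(\lambda) = E(i\lambda)\mho E(i\lambda)^*$ in accordance with (\ref{Psi0}). First I would take the expression $\Sigma_\theta(\lambda) = E^* \phi(2i\theta\Psi) E$ from (\ref{Sigma}) and substitute the power series $\phi(u) = \sum_{k\geq 0} \phi_k u^k$ of (\ref{phi}), (\ref{phik0}), so that $\Sigma_\theta = \sum_{k\geq 0} \phi_k (2i\theta)^k E^*(E\mho E^*)^k E$. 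This term-by-term manipulation is legitimate because $\phi$ is entire and $\Psi(\lambda)$ is bounded in $\lambda$ (as $E$ is strictly proper with $A$ Hurwitz), so the series converges absolutely and uniformly for each fixed $\lambda \in \mR$.

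The key step is to insert the ``system transposition'' factorisation of Theorem~\ref{th:EOEk}, which on the imaginary axis reads $(E\mho E^*)^k = (-1)^k \rprod_{j=1}^k (\alpha_j^\rT E^*)\, \beta_k \lprod_{j=1}^k (E\alpha_j)$. Multiplying on the left by $E^*$ and on the right by $E$, and using the convention $\alpha_0 = I_n$ together with the definition (\ref{Gk}) of $G_k$ and the duality (\ref{EEEdual}), so that $G_k(i\lambda)^* = \rprod_{j=0}^k (\alpha_j^\rT E^*)$ and $G_k(i\lambda) = \lprod_{j=0}^k (E\alpha_j)$, one obtains $E^*(E\mho E^*)^k E = (-1)^k G_k^* \beta_k G_k$ for every $k \geq 0$; the case $k = 0$ reduces to $E^* E = G_0^* G_0$ since $\beta_0 = I_n$ and $\phi_0 = 1$. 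Substituting this back gives $\Sigma_\theta = \sum_{k\geq 0} \phi_k (-2i\theta)^k G_k^* \beta_k G_k$.

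It then remains to recognise this sum as the quadratic form associated with the block-diagonal matrix $H_\theta$ of (\ref{H}). Since $G$ in (\ref{G}) is the column stack of the $G_k$, we have $G(i\lambda)^* H_\theta G(i\lambda) = \sum_{k\geq 0} G_k^* \big((-2i\theta)^k \phi_k \beta_k\big) G_k = \Sigma_\theta(\lambda)$, and multiplying by the square root $S$ from (\ref{BOBroot}) on both sides yields $\Delta_\theta(\lambda) = S\Sigma_\theta(\lambda) S = S G(i\lambda)^* H_\theta G(i\lambda) S$, which is (\ref{DelGH}). The condition (\ref{gamdet}) enters through Theorem~\ref{th:EOEk}, guaranteeing that all the matrices $\alpha_j, \beta_j, \gamma_j$ in (\ref{alfbetgamnext})--(\ref{bet0}) are well defined.

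The main obstacle I expect is handling the infinite-dimensional objects cleanly: one must make sure the series defining $\phi(2i\theta\Psi(\lambda))$ converges in a way that legitimises the reindexing (absorbing the outermost $E^*$ into $\rprod_{j=1}^k(\alpha_j^\rT E^*)$ via $\alpha_0 = I_n$) and the identification of the resulting infinite sum with $G^* H_\theta G$, where $G$ has an $\mR^\infty$-valued output. Keeping everything pointwise in $\lambda$ and working with the transfer functions rather than with the infinite-dimensional state-space realisations of Fig.~\ref{fig:phi} and Fig.~\ref{fig:EEE} should contain this difficulty, since for each fixed $\lambda$ the whole argument reduces to an absolutely convergent series of finite matrices. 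A secondary, purely bookkeeping, point is verifying carefully that the ``pulling'' procedure in Theorem~\ref{th:EOEk} indeed turns $E^* \rprod_{j=1}^k(\alpha_j^\rT E^*)$ into $\rprod_{j=0}^k(\alpha_j^\rT E^*) = G_k(i\lambda)^*$ and $\lprod_{j=1}^k(E\alpha_j) E$ into $\lprod_{j=0}^k(E\alpha_j) = G_k(i\lambda)$, with the signs $(-1)^k$ tracked consistently so that they land in $H_\theta$ as $(-2i\theta)^k$.
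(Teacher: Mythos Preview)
Your proposal is correct and follows essentially the same route as the paper's own proof: expand $\Sigma_\theta = E^*\phi(2i\theta\Psi)E$ as a power series, apply Theorem~\ref{th:EOEk} termwise on the imaginary axis to rewrite $E^*(E\mho E^*)^k E$ as $(-1)^k G_k^*\beta_k G_k$ via the convention $\alpha_0=I_n$ and the duality (\ref{EEEdual}), then recognise the resulting sum as $G^* H_\theta G$ and sandwich by $S$. Your added remarks on absolute convergence and the bookkeeping of $\alpha_0$ are sound refinements of what the paper leaves implicit.
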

\begin{proof}
By considering (\ref{EOEk}) on the imaginary axis $i\mR$  and substituting it into (\ref{Sigma}), it follows that
\begin{align}
\nonumber
    \Sigma_\theta(\lambda)
    & =
    E(i\lambda)^*
    \sum_{k=0}^{+\infty}
    (-2i\theta)^k
    \phi_k
    \rprod_{j=1}^k
    (\alpha_j^\rT E(i\lambda)^*)
    \beta_k
    \lprod_{j=1}^k
    (E(i\lambda) \alpha_j)
    E(i\lambda)\\
\nonumber
    & =
    \sum_{k=0}^{+\infty}
    (-2i\theta)^k
    \phi_k
    G_k(i\lambda)^* \beta_k G_k(i\lambda)\\
\label{SigGH}
    & =
    G(i\lambda)^* H_\theta G(i\lambda),
    \qquad
    \lambda \in \mR,
\end{align}
where use is made of (\ref{EEEdual})--(\ref{H}). The factorisation (\ref{DelGH}) is now obtained by combining (\ref{SigGH}) with (\ref{Delta}).
\end{proof}

The spectral factorisation (\ref{DelGH}) involves the infinite cascade of classical linear systems forming the system $G$, which will be used in a shaping filter for an auxiliary Gaussian process.

\section{QEF growth rate representation in terms of classical Gaussian processes}
\label{sec:class}

We will now relate the quantum QEF growth rate (\ref{UpsDel}) to a similar functional for stationary $\mR^{\infty}$-valued zero-mean Gaussian random processes $\xi$, $\eta$  produced from independent standard Wiener processes $\omega_1$, $\omega_2$ in $\mR^n$
by an infinite-dimensional shaping filter (in operator form)
\begin{equation}
\label{GG}
    \zeta
    :=
    \begin{bmatrix}
      \xi\\
      \eta
    \end{bmatrix}
    =
    (I_2 \ox G)
    \cR(S)\rd  \omega,
    \qquad
    \omega:=
    \begin{bmatrix}
      \omega_1\\
      \omega_2
    \end{bmatrix}.
\end{equation}
Here, the system $G$ in  (\ref{G}) is used (under the condition (\ref{gamdet})) along with  the following representation of the square root $S$ from (\ref{BOBroot}):
\begin{equation}
\label{cR}
    \cR(S)
    :=
      \begin{bmatrix}
      \Re S & -\Im S\\
      \Im S & \Re S
    \end{bmatrix}
    =
    I_2 \ox \Re S - \bJ\ox \Im S,
\end{equation}
where the matrix $\bJ$ is given by (\ref{bJ}); see Fig.~\ref{fig:GG}. 
\begin{figure}[htbp]
\centering
\unitlength=1mm
\linethickness{0.5pt}
\begin{picture}(55.00,62.00)
\put(10,15){\framebox(10,10)[cc]{\small$\Im S$}}
\put(10,35){\framebox(10,10)[cc]{\small$\Re S$}}
\put(15,30){\vector(0,1){5}}
\put(15,30){\vector(0,-1){5}}
\put(5,30){\line(1,0){10}}
\put(0,30){\makebox(0,0)[cc]{{\small$\omega_1$}}}
\put(40,35){\framebox(10,10)[cc]{\small$-\Im S$}}
\put(40,15){\framebox(10,10)[cc]{\small$\Re S$}}
\put(45,30){\vector(0,1){5}}
\put(45,30){\vector(0,-1){5}}
\put(55,30){\line(-1,0){10}}
\put(60,30){\makebox(0,0)[cc]{{\small$\omega_2$}}}
\put(20,20){\vector(1,0){7.5}}
\put(40,20){\vector(-1,0){7.5}}
\put(30,17.5){\vector(0,-1){7.5}}
\put(30,20){\circle{5}}
\put(30,20){\makebox(0,0)[cc]{{$+$}}}
\put(30,40){\circle{5}}
\put(30,40){\makebox(0,0)[cc]{{$+$}}}
\put(20,40){\vector(1,0){7.5}}
\put(40,40){\vector(-1,0){7.5}}
\put(30,42.5){\vector(0,1){7.5}}
\put(25,55){\vector(-1,0){10}}
\put(11,55){\makebox(0,0)[cc]{{\small$\xi$}}}
\put(11,5){\makebox(0,0)[cc]{{\small$\eta$}}}
\put(25,5){\vector(-1,0){10}}
\put(25,0){\framebox(10,10)[cc]{\small$G$}}
\put(25,50){\framebox(10,10)[cc]{\small$G$}}
\end{picture}
\caption{An infinite-dimensional shaping filter which produces $\mR^\infty$-valued stationary Gaussian random processes $\xi$, $\eta$ from independent standard Wiener processes $\omega_1$, $\omega_2$ in $\mR^n$ and uses the square root $S$ from (\ref{BOBroot}) and the system $G$ from (\ref{G}).}
\label{fig:GG}
\end{figure}
Since (\ref{BJB}), (\ref{OmegaT}), (\ref{BOBroot}) imply that     $\cR(S)$ is a real symmetric matrix satisfying
\begin{equation}
\label{cRS2}
    \cR(S)^2
    =
    \cR(B\Omega^\rT B^\rT)
    =
    \cR(BB^\rT - i\mho)
    =
    \begin{bmatrix}
      BB^\rT & \mho\\
      -\mho & BB^\rT
    \end{bmatrix},
\end{equation}
the spectral density of the process $\zeta$ in (\ref{GG}) is factorised by the transfer function of the system $G$ as
\begin{equation}
\label{specden}
    (I_2 \ox G(i\lambda))
    \begin{bmatrix}
      BB^\rT  & \mho\\
      -\mho & BB^\rT
    \end{bmatrix}
    (I_2 \ox G(i\lambda)^*),
    \qquad
    \lambda \in \mR.
\end{equation}
In accordance with the structure of the system $G$ in (\ref{G}), the processes $\xi$, $\eta$ are split into $\mR^n$-valued subvectors as
\begin{equation}
\label{xieta}
    \xi
    :=
    \begin{bmatrix}
      \xi_0\\
      \xi_1\\
      \xi_2\\
      \vdots
    \end{bmatrix},
    \qquad
    \eta
    :=
    \begin{bmatrix}
      \eta_0\\
      \eta_1\\
      \eta_2\\
      \vdots
    \end{bmatrix}
\end{equation}
and are governed by an infinite cascade of classical Ito SDEs \cite{KS_1991} driven by $\omega$:
\begin{align}
\label{SDE0}
    \rd
    \begin{bmatrix}
      \xi_0 \\
      \eta_0
    \end{bmatrix}
    & =
    (I_2\ox A)
    \begin{bmatrix}
      \xi_0 \\
      \eta_0
    \end{bmatrix}
    \rd t
    +
    \cR(S)
    \rd \omega,\\
\label{SDEk}
    \rd
    \begin{bmatrix}
      \xi_k \\
      \eta_k
    \end{bmatrix}
    & =
    \Big(
    (I_2\ox A)
    \begin{bmatrix}
      \xi_k \\
      \eta_k
    \end{bmatrix}
    +
    (I_2\ox \alpha_k)
    \begin{bmatrix}
      \xi_{k-1} \\
      \eta_{k-1}
    \end{bmatrix}
    \Big)
    \rd t,
    \qquad
    k \> 1
\end{align}
(the time arguments are omitted for brevity).
Due to the absence of diffusion terms in (\ref{SDEk}),
$\xi_k$, $\eta_k$ have continuously differentiable sample paths for all $k\>1$.
The initial state $(\xi(0),\eta(0))$ is independent of the standard Wiener process $\omega$ and is distributed  according to the unique invariant Gaussian measure for these SDEs  (which is well defined since the matrix $A$ is Hurwitz).
Associated with the subvectors of $\xi$, $\eta$ in (\ref{xieta}) are the following $\mR^{4n}$-valued stationary zero-mean Gaussian processes:
\begin{equation}
\label{xietak}
    \zeta_k :=
    \begin{bmatrix}
      \xi_{2k}\\
      \eta_{2k}\\
      \xi_{2k+1}\\
      \eta_{2k+1}
    \end{bmatrix},
    \qquad
    k = 0,1,2,\ldots.
\end{equation}
We will also use the real and imaginary parts of the matrix $H_\theta$ in  (\ref{H}) which are also block diagonal matrices:
\begin{align}
\label{ReH}
    \Re H_\theta
    & =
    \diag(
        f_0(\theta),0,\
        f_1(\theta),0,\
        f_2(\theta),0,\ldots
        ),\\
\label{ImH}
    \Im H_\theta
    & =
    \diag(
        0,g_0(\theta),\
        0,g_1(\theta),\
        0,g_2(\theta),\ldots
        ).
\end{align}
In view of Lemma~\ref{lem:alfbetgam},
the nontrivial diagonal blocks of $\Re H_\theta$ are real positive definite symmetric matrices
\begin{equation}
\label{fk}
    f_k(\theta)
    :=
    (-2i\theta)^{2k}\phi_{2k} \beta_{2k}
    =
    (-4\theta^2)^k\phi_{2k} \beta_{2k}
    \succ 0,
\end{equation}
whereas the nontrivial diagonal blocks of $\Im H_\theta$ are real antisymmetric matrices:
\begin{equation}
\label{gk}
    g_k(\theta)
    :=
    \frac{1}{i}
    (-2i\theta)^{2k+1}\phi_{2k+1} \beta_{2k+1}
    =
    -2\theta(-4\theta^2)^k\phi_{2k+1} \beta_{2k+1}
\end{equation}
for all $k = 0,1,2,\ldots$. We assemble (\ref{fk}), (\ref{gk})
into real symmetric matrices of order $4n$:
\begin{equation}
\label{hk}
    h_k(\theta)
    :=
    \begin{bmatrix}
        I_2 \ox f_k(\theta) & 0\\
        0& -\bJ \ox g_k(\theta)
    \end{bmatrix},
    \qquad
    k = 0,1,2,\ldots,
\end{equation}
where the matrix $\bJ$ from (\ref{bJ}) is used.
In view of (\ref{ReH})--(\ref{gk}),  a combination of the processes (\ref{xietak}) with the matrices (\ref{hk}) leads to the identity
\begin{align}
\nonumber
    Q_\theta
    & :=
      \zeta^\rT
      \cR(H_\theta)
    \zeta\\
\nonumber
    & =
      \xi^\rT
      \Re H_\theta \xi
      +
      \eta^\rT
      \Re H_\theta \eta - 2 \xi^\rT \Im H_\theta \eta\\
\nonumber
    & =
    \sum_{k=0}^{+\infty}
    (\xi_{2k}^\rT
      f_k(\theta)\xi_{2k}
      +
      \eta_{2k}^\rT f_k(\theta)\eta_{2k}
      -
      2
    \xi_{2k+1}^\rT
      g_k(\theta)\eta_{2k+1})\\
\label{Q}
    &
    =
    \sum_{k=0}^{+\infty}
    \zeta_k^\rT h_k(\theta)\zeta_k,
\end{align}
with the map $\cR$ from (\ref{cR}) being applied to the matrix $H_\theta$ in (\ref{H}). The almost sure convergence of the random series on the right-hand side  of (\ref{Q}) can be  established 
by using the structure of the matrices (\ref{hk}) along with the invariant zero-mean  Gaussian measure for the process $\zeta$ whose covariance matrix
\begin{equation}
\label{cP}
    \cP
    :=
    \bM(\zeta\zeta^\rT)
    =
    (\bM(\zeta_j\zeta_k^\rT))_{j,k\> 0}\in \mS_\infty^+
\end{equation}
(with $\bM(\cdot)$ the classical expectation)
satisfies the ALE
\begin{equation}
\label{cPALE}
    \cA \cP + \cP \cA^\rT + \cB\cB^\rT = 0.
\end{equation}
Here,
\begin{equation}
\label{cAB}
    \cA
    :=
    \begin{bmatrix}
        I_2\ox A   & 0 & 0 &\ldots &\ldots \\
        I_2\ox \alpha_1  & I_2\ox A & 0 & \ldots  &\ldots \\
        0   & I_2\ox \alpha_2 & I_2\ox A & \ldots  &\ldots \\
        \ldots    &\ldots   & \ldots  & \ldots  &\ldots
    \end{bmatrix},
    \qquad
    \cB
    :=
    \begin{bmatrix}
      \cR(S)\\
      0\\
      0\\
      \vdots
    \end{bmatrix}
\end{equation}
are infinite-dimensional matrices formed from the state-space matrices of the SDEs (\ref{SDE0}), (\ref{SDEk}).
Therefore, (\ref{Q}) defines a strictly stationary real-valued random process $Q_\theta$ whose expectation
\begin{equation}
\label{MQ}
    \bM Q_\theta
    =
    \frac{1}{2\pi}
    \int_{\mR}
    \Tr \Pi_\theta(\lambda)
    \rd \lambda
\end{equation}
(at any moment of time)
 is expressed
in terms of another spectral function $\Pi_\theta: \mR\to \mH_{2n}$ given by
\begin{equation}
\label{Pi}
    \Pi_\theta(\lambda)
    :=
    \cR(S)
    (I_2 \ox G(i\lambda)^*)
    \cR(H_\theta)(I_2 \ox G(i\lambda))
    \cR(S),
    \qquad
    \lambda\in \mR,
\end{equation}
in accordance with the spectral density (\ref{specden}) of the process $\zeta$ in (\ref{GG}). The following lemma provides a link between $\Pi_\theta$ and the 
spectral density $\Delta_\theta$ in (\ref{Delta}).

\begin{lem}
\label{lem:Pi}
Under the condition (\ref{gamdet}), the function $\Pi_\theta$ in  (\ref{Pi}) is related to the spectral density $\Delta_\theta$ in (\ref{Delta}) by
\begin{equation}
\label{PiDel}
    \Pi_\theta(\lambda)
    =
    N
    \begin{bmatrix}
        \Delta_\theta(\lambda) & 0\\
        0 & \Delta_\theta(-\lambda)^\rT
    \end{bmatrix}
    N^*,
    \qquad
    \lambda \in \mR,
\end{equation}
where
\begin{equation}
\label{N}
  N:=
  \frac{1}{\sqrt{2}}
    \begin{bmatrix}
        1 & 1\\
        -i & i
    \end{bmatrix}
    \ox
    I_n
\end{equation}
is a unitary matrix. In particular, $\Pi_\theta$ takes values in $\mH_{2n}^+$. 
\hfill$\square$
\end{lem}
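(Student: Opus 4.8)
The plan is to verify the identity (\ref{PiDel}) by a direct algebraic manipulation, using the factorisation of $\Delta_\theta$ from Theorem~\ref{th:Delta} and the block structure of the matrix $\cR(\cdot)$ together with the specific form of the unitary matrix $N$ in (\ref{N}). First I would recall that, by (\ref{DelGH}), $\Delta_\theta(\lambda) = SG(i\lambda)^* H_\theta G(i\lambda) S$, so that the right-hand side of (\ref{PiDel}) becomes
$$
    N
    \begin{bmatrix}
        S & 0\\
        0 & \overline{S}
    \end{bmatrix}
    \begin{bmatrix}
        G(i\lambda)^* H_\theta G(i\lambda) & 0\\
        0 & G(-i\lambda)^\rT \overline{H_\theta}\, \overline{G(-i\lambda)}
    \end{bmatrix}
    \begin{bmatrix}
        S & 0\\
        0 & \overline{S}
    \end{bmatrix}
    N^*,
$$
where I have used $\Delta_\theta(-\lambda)^\rT = \overline{S}\, G(-i\lambda)^\rT \overline{H_\theta}\, \overline{G(-i\lambda)}\, \overline{S}$ (the matrices $G$ having real state-space realisations, so $G(\overline{s}) = \overline{G(s)}$, hence $G(-i\lambda) = \overline{G(i\lambda)}$ and $G(-i\lambda)^\rT = G(i\lambda)^*$).

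The key step is then to recognise that conjugation by $N$ of a block-diagonal matrix $\diag(Z, \overline{Z})$ (for $Z$ built from $G(i\lambda)$ and $H_\theta$) reproduces exactly the $\cR(\cdot)$ structure. Concretely, I would establish the elementary identity
$$
    N
    \begin{bmatrix}
        Z & 0\\
        0 & \overline{Z}
    \end{bmatrix}
    N^*
    =
    \cR(Z)
    :=
    \begin{bmatrix}
      \Re Z & -\Im Z\\
      \Im Z & \Re Z
    \end{bmatrix}
$$
for any $Z \in \mC^{n\x n}$, which follows from $N (e_1 \ox I_n) = \frac{1}{\sqrt2}\begin{bmatrix} I_n\\ -iI_n\end{bmatrix}$, $N(e_2\ox I_n) = \frac{1}{\sqrt2}\begin{bmatrix} I_n\\ iI_n\end{bmatrix}$ by a two-by-two block computation. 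Applying this with $Z = S$, $Z = G(i\lambda)$ and $Z = H_\theta$ turns the conjugated product into $\cR(S)\,\cR(G(i\lambda))^*\,\cR(H_\theta)\,\cR(G(i\lambda))\,\cR(S)$, provided I also check that $\cR(\cdot)$ is multiplicative, i.e. $\cR(Z_1)\cR(Z_2) = \cR(Z_1 Z_2)$, and respects the conjugate transpose, $\cR(Z)^* = \cR(Z^*)$ — both of which are immediate from the representation $\cR(Z) = I_2\ox \Re Z - \bJ\ox\Im Z$ and the fact that $I_2, \bJ$ span a copy of $\mC$ inside $\mR^{2\x2}$. Finally I would note that $I_2 \ox G(i\lambda)$ appearing in (\ref{Pi}) is precisely $\cR(G(i\lambda))$ when $G$ is real (since $\Im G = 0$, so $\cR(G) = I_2 \ox G$), which identifies the resulting expression with $\Pi_\theta(\lambda)$ as defined in (\ref{Pi}), after inserting the matching factors of $\cR(S)$.

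The main obstacle — really a bookkeeping rather than a conceptual one — is keeping the complex-conjugation and transposition conventions consistent between the "physical" ordering in (\ref{Pi}) (which uses $\cR(S)$, $\cR(H_\theta)$ and the real system $I_2\ox G$) and the "diagonalised" ordering in (\ref{PiDel}) (which uses the honest complex matrices $S$, $H_\theta$, $G(i\lambda)$ and their genuine adjoints), since the $(2,2)$ block of (\ref{PiDel}) involves $\Delta_\theta$ evaluated at $-\lambda$ and transposed, and one must confirm this is exactly $\overline{\Delta_\theta(\lambda)}$ so that the pair $\diag(\Delta_\theta(\lambda), \overline{\Delta_\theta(\lambda)})$ is what gets conjugated by the unitary $N$. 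Once the identity $\Delta_\theta(-\lambda)^\rT = \overline{\Delta_\theta(\lambda)}$ is in hand (which, again, uses only that $G$ and $S\cdot(\cdot)\cdot S$-type factors arise from real data, so $\Delta_\theta(-\lambda) = \overline{\Delta_\theta(\lambda)}$ and hence $\Delta_\theta(-\lambda)^\rT = \Delta_\theta(\lambda)^*= \overline{\Delta_\theta(\lambda)}$ by Hermitian symmetry of the spectral density), the rest is the routine $\cR$-calculus above. The claim that $\Pi_\theta$ takes values in $\mH_{2n}^+$ then follows for free: $N$ is unitary and the inner block-diagonal matrix is Hermitian positive semi-definite because $\Delta_\theta(\lambda) \in \mH_n^+$ (by Theorem~\ref{th:Delta}) and positive semi-definiteness is preserved under transposition and complex conjugation.
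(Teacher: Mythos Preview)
Your overall strategy---diagonalising via the unitary $N$ and using the identity $N\diag(Z,\overline{Z})N^*=\cR(Z)$---is the right one, and matches the paper. But the proof has a genuine gap at the step where you claim $I_2\ox G(i\lambda)=\cR(G(i\lambda))$ ``since $\Im G=0$''. The system $G$ has real state-space matrices, but its transfer function evaluated at $s=i\lambda$ is \emph{complex}-valued: $G(i\lambda)=(i\lambda I-A)^{-1}$-type blocks are not real for $\lambda\ne 0$. Hence $\cR(G(i\lambda))\ne I_2\ox G(i\lambda)$, and your product $\cR(S)\,\cR(G(i\lambda))^*\,\cR(H_\theta)\,\cR(G(i\lambda))\,\cR(S)$ is \emph{not} the same as $\Pi_\theta(\lambda)$ in (\ref{Pi}). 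The related claim $\Delta_\theta(-\lambda)^\rT=\overline{\Delta_\theta(\lambda)}$ also fails: since $S$ and $H_\theta$ are complex Hermitian (not real symmetric), one gets $\Delta_\theta(-\lambda)^\rT=\overline{S}\,G(i\lambda)^*\,\overline{H_\theta}\,G(i\lambda)\,\overline{S}$, whereas $\overline{\Delta_\theta(\lambda)}=\overline{S}\,G(i\lambda)^\rT\,\overline{H_\theta}\,\overline{G(i\lambda)}\,\overline{S}$, and these differ precisely because $G(i\lambda)$ is not real.

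The fix is small but essential: do not try to absorb $I_2\ox G(i\lambda)$ into the $\cR$-calculus. Instead observe that $N=M\ox I_n$ with $M\in\mC^{2\times 2}$, so by the mixed-product rule $N(I_2\ox c)=(M\ox I_n)(I_2\ox c)=M\ox c=(I_2\ox c)N$ for \emph{any} complex $c$. Thus $N$ and $N^*$ commute with $I_2\ox G(i\lambda)$ and $I_2\ox G(i\lambda)^*$, and you can slide them past these factors unchanged, applying $N\diag(c,\overline{c})N^*=\cR(c)$ only to the genuinely complex blocks $S$ and $H_\theta$. After this rearrangement the $(2,2)$ diagonal block is $\overline{S}\,G(i\lambda)^*\,\overline{H_\theta}\,G(i\lambda)\,\overline{S}$, which one then identifies with $\Delta_\theta(-\lambda)^\rT$ directly (using $G(-i\lambda)=\overline{G(i\lambda)}$ and $S^\rT=\overline{S}$, $H_\theta^\rT=\overline{H_\theta}$), without ever invoking the false identity $\Delta_\theta(-\lambda)^\rT=\overline{\Delta_\theta(\lambda)}$. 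This is exactly what the paper does.
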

\begin{proof}
From (\ref{cR}), (\ref{N}), it follows that the matrix $N$ secures the unitary equivalence
\begin{equation}
\label{Nabc}
    \cR(c)
    =
    N
    \begin{bmatrix}
      c & 0\\
        0 & \overline{c}
    \end{bmatrix}
    N^*,
    \qquad
    c\in \mC^{n\x n}.
\end{equation}
Also, both $N$ and $N^*$, due to their Kronecker product structure,  commute with $I_2\ox c$:
\begin{equation}
\label{Nc}
    [N, I_2\ox c] = 0,
    \qquad
    [N^*,I_2\ox c] = 0,
    \qquad
    c \in \mC^{n\x n}.
\end{equation}
Repeated application of (\ref{Nabc}), (\ref{Nc}) to (\ref{Pi}) leads to
\begin{align}
\nonumber
    \Pi_\theta(\lambda)
    & =
    N
    \begin{bmatrix}
      S & 0\\
      0 & \overline{S}
    \end{bmatrix}
    N^*
    (I_2 \ox G(i\lambda)^*)
    \cR(H_\theta)
    (I_2 \ox G(i\lambda))
    N
    \begin{bmatrix}
      S & 0\\
      0 & \overline{S}
    \end{bmatrix}
    N^*\\
\nonumber
    & =
    N
    \begin{bmatrix}
      S & 0\\
      0 & \overline{S}
    \end{bmatrix}
    (I_2 \ox G(i\lambda)^*)
        N^*
        \cR(H_\theta)
    N    (I_2 \ox G(i\lambda))
    \begin{bmatrix}
      S & 0\\
      0 & \overline{S}
    \end{bmatrix}
    N^*    \\
\nonumber
    & =
    N
    \begin{bmatrix}
      S & 0\\
      0 & \overline{S}
    \end{bmatrix}
    (I_2 \ox G(i\lambda)^*)
    \begin{bmatrix}
      H_\theta & 0\\
      0 & \overline{H_\theta}
    \end{bmatrix}(I_2 \ox G(i\lambda))
    \begin{bmatrix}
      S & 0\\
      0 & \overline{S}
    \end{bmatrix}
    N^*\\
\nonumber
    & =
    N
    \begin{bmatrix}
      S G(i\lambda)^* H_\theta G(i\lambda) S  & 0\\
      0 & \overline{S} G(i\lambda)^* \overline{H_\theta}G(i\lambda)\overline{S}
    \end{bmatrix}
    N^*\\
\label{PiNN}
    & =
    N
    \begin{bmatrix}
      \Delta_\theta(\lambda)  & 0\\
      0 & \overline{S} G(i\lambda)^* \overline{H_\theta}G(i\lambda)\overline{S}
    \end{bmatrix}
    N^*,
    \qquad
    \lambda \in \mR,
\end{align}
where the last equality also uses the factorisation (\ref{DelGH}).
Since the matrices $S$,  $H_\theta$ in (\ref{BOBroot}), (\ref{H}) are Hermitian, then (\ref{DelGH}) implies that
\begin{align}
\nonumber
  \Delta_\theta(-\lambda)^\rT
  & =
  (SG(-i\lambda)^* H_\theta G(-i\lambda)S)^\rT\\
\nonumber
  & =
  S^\rT G(-i\lambda)^\rT H_\theta^\rT\, \overline{G(-i\lambda)}S^\rT\\
\label{DelGHT}
  & =
  \overline{S} G(i\lambda)^* \overline{H_\theta}G(i\lambda)\overline{S},
  \qquad
  \lambda \in \mR,
\end{align}
where use is also made of the relation $G(i\lambda)^* = G(-i\lambda)^\rT$   in view of the system $G$ in (\ref{G}) having real state-space matrices. Substitution of (\ref{DelGHT}) into (\ref{PiNN}) establishes (\ref{PiDel}). The latter implies that $\Pi_\theta(\lambda)\succcurlyeq 0$ since $\Delta_\theta(\lambda), \Delta_\theta(-\lambda)^\rT \succcurlyeq 0$ for any $\lambda\in \mR$.
\end{proof}

The unitary equivalence (\ref{PiDel}) allows the relation (\ref{MQ}) to be represented in terms of the spectral density $\Delta_\theta$ as 
$$
    \bM Q_\theta
    =
    \frac{1}{2\pi}
    \int_{\mR}
    \Tr (\Delta_\theta(\lambda) + \Delta_\theta(-\lambda))
    \rd \lambda
    =
    \frac{1}{\pi}
    \int_{\mR}
    \Tr \Delta_\theta(\lambda)
    \rd \lambda.
$$
The following theorem exploits a similar connection for the exponential-of-integral moments of the process $Q_\theta$.

\begin{thm}
\label{th:QEFclass}
Under the condition (\ref{gamdet}), for any $\theta>0$ satisfying (\ref{spec1}), the QEF growth rate (\ref{Ups}) is related by
\begin{equation}
\label{Upsclass}
    \Ups(\theta)
     =
    \frac{1}{2}
    \lim_{T\to +\infty}
    \Big(
    \frac{1}{T}
    \ln
    \bM
    \re^{
    \frac{\theta}{2}
    \int_0^T
    Q_\theta(t)
    \rd t}
    \Big)
\end{equation}
to the process $Q_\theta$ in (\ref{Q}). 
\hfill$\square$
\end{thm}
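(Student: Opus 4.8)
The plan is to evaluate the classical growth rate on the right-hand side of~(\ref{Upsclass}) by the frequency-domain formula for quadratic-exponential functionals of stationary Gaussian processes, and then to match it with the quantum QEF rate using Lemmas~\ref{lem:UpsDel} and~\ref{lem:Pi}. First I would invoke the classical counterpart of the $\cH_\infty$-entropy integral: for a centred stationary Gaussian process $Z$ in $\mR^N$ with spectral density $\cS_Z$ and a bounded real symmetric weighting matrix $\Pi$, one has $\lim_{T\to+\infty}\frac{1}{T}\ln\bM\,\re^{\frac{\mu}{2}\int_0^T Z(t)^\rT\Pi Z(t)\rd t}=-\frac{1}{4\pi}\int_\mR\ln\det(I_N-\mu\,\cS_Z(\lambda)\Pi)\rd\lambda$, valid under the smallness condition $\theta\,\cS_Z(\lambda)\Pi\prec I_N$ for all $\lambda\in\mR$ (the determinant being real, since it equals $\det(I_N-\mu\,\cS_Z^{1/2}(\lambda)\Pi\cS_Z^{1/2}(\lambda))$ with a Hermitian argument). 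Applying this with $Z:=\zeta$ from~(\ref{GG}), weighting $\Pi:=\cR(H_\theta)$ and $\mu:=\theta$, and reading off the spectral density of $\zeta$ from~(\ref{cRS2}), (\ref{specden}) as $(I_2\ox G(i\lambda))\cR(S)^2(I_2\ox G(i\lambda))^*$, the right-hand side of~(\ref{Upsclass}) becomes $-\frac{1}{8\pi}\int_\mR\ln\det\big(I_\infty-\theta\,(I_2\ox G(i\lambda))\cR(S)^2(I_2\ox G(i\lambda))^*\cR(H_\theta)\big)\rd\lambda$.

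Next I would collapse this a priori infinite-dimensional determinant to a finite $2n\times 2n$ one. Writing the operator inside as $I_\infty-\theta XY$ with $X:=(I_2\ox G(i\lambda))\cR(S)$ and $Y:=\cR(S)(I_2\ox G(i\lambda))^*\cR(H_\theta)$, the cyclic identity $\det(I-\theta XY)=\det(I-\theta YX)$ together with $\cR(S)^*=\cR(S)$ gives $\det(I_\infty-\theta XY)=\det(I_{2n}-\theta\,\Pi_\theta(\lambda))$ with $\Pi_\theta$ exactly as in~(\ref{Pi}); the accompanying smallness condition is $\theta\,\Pi_\theta(\lambda)\prec I_{2n}$ for all $\lambda$, which by Lemma~\ref{lem:Pi} and $\Pi_\theta(\lambda)\succcurlyeq 0$ reduces to $\theta\,\Delta_\theta(\lambda)\prec I_n$ for all $\lambda$, i.e.\ to~(\ref{spec1}) by the argument in the proof of Lemma~\ref{lem:UpsDel}. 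Now Lemma~\ref{lem:Pi} and the unitarity of $N$ in~(\ref{N}) yield $\det(I_{2n}-\theta\,\Pi_\theta(\lambda))=\det(I_n-\theta\,\Delta_\theta(\lambda))\det(I_n-\theta\,\Delta_\theta(-\lambda))$ (using $\det(I_n-\theta M^\rT)=\det(I_n-\theta M)$ on the block $\Delta_\theta(-\lambda)^\rT$), so that integrating over $\lambda\in\mR$ and substituting $\lambda\mapsto-\lambda$ in the second factor gives $\int_\mR\ln\det(I_{2n}-\theta\,\Pi_\theta(\lambda))\rd\lambda=2\int_\mR\ln\det(I_n-\theta\,\Delta_\theta(\lambda))\rd\lambda$. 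Hence the right-hand side of~(\ref{Upsclass}) equals $-\frac{1}{4\pi}\int_\mR\ln\det(I_n-\theta\,\Delta_\theta(\lambda))\rd\lambda$, which is $\Ups(\theta)$ by Lemma~\ref{lem:UpsDel}; the factor $\frac12$ in~(\ref{Upsclass}) thus absorbs the doubling of channels ($\xi$ and $\eta$) in the shaping filter~(\ref{GG}).

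The hard part will be the rigorous justification of the classical QEF growth-rate formula in the present \emph{infinite-dimensional} setting: the almost sure convergence of the series~(\ref{Q}) defining $Q_\theta$ (which rests on the invariant-covariance ALE~(\ref{cPALE}) and the summability of the weighted blocks of $H_\theta$), the trace-class estimates underpinning the cyclic-determinant reduction, and the interchange of the infinite sum with the exponential moment and the $T\to+\infty$ limit. A clean way around this, which also anticipates Section~\ref{sec:trunc}, is to truncate the cascade $G$ of~(\ref{G}) to its first $\nu+1$ blocks, obtaining a finite-dimensional stationary Gaussian process $\zeta^{(\nu)}$ for which the standard finite-dimensional classical QEF formula applies verbatim, and then to pass to the limit $\nu\to+\infty$: the truncated spectral functions $\Pi_\theta^{(\nu)}$ converge to $\Pi_\theta$ uniformly on $\mR$ with a $\nu$-uniform integrable majorant (inherited from the strict properness of $E$ and $\phi(0)=1$, as in the integrability argument preceding Lemma~\ref{lem:UpsDel}), which gives both convergence of the integrals and validity of the limit in~(\ref{Upsclass}). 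With the classical formula in hand, the remaining steps — the cyclic reduction to $\Pi_\theta$ and the folding $\lambda\mapsto-\lambda$ via Lemma~\ref{lem:Pi} — are purely algebraic.
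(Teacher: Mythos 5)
Your proposal follows essentially the same route as the paper's proof: the classical frequency-domain QEF formula justified via truncation of the infinite cascade (the paper truncates $Q_\theta$ to $Q_{\theta,r}$ and invokes the Fredholm determinant formula with Toeplitz spectral asymptotics plus a uniform integrability argument, which is the same device you describe), followed by the reduction of the infinite-dimensional determinant to $\det(I_{2n}-\theta\Pi_\theta(\lambda))$, the unitary splitting of Lemma~\ref{lem:Pi} with the $\lambda\mapsto-\lambda$ folding, and the comparison with Lemma~\ref{lem:UpsDel} producing the factor $\tfrac12$. Your explicit use of the cyclic identity $\det(I-\theta XY)=\det(I-\theta YX)$ to collapse the determinant is only a cosmetic difference from the paper, which writes the rate directly in terms of $\Pi_\theta$.
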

\begin{proof}
The process $Q_\theta$ in (\ref{Q}) is a quadratic form, with the matrix  (\ref{H}),  in the Gaussian random process $\zeta$ from (\ref{GG}) with the spectral density (\ref{specden}).
By using its truncation
\begin{equation}
\label{Qr}
  Q_{\theta,r}
  :=
      \sum_{k=0}^r
    \zeta_k^\rT
    h_k(\theta)
    \zeta_k,
\end{equation}
applying  the Fredholm determinant formula \cite[Theorem 3.10 on p. 36]{S_2005} (see also \cite{G_1994})  along with the asymptotic infinite-hori\-zon behaviour of spectra of Toeplitz operators, and passing to the limit as $r\to +\infty$ in (\ref{Qr}) combined with a uniform integrability argument, 
it follows that the classical QEF rate for $\zeta$ on the right-hand side of (\ref{Upsclass}) is related to the function $\Pi_\theta$ in (\ref{Pi}) as
\begin{equation}
\label{lndet1}
    \lim_{T\to +\infty}
    \Big(
        \frac{1}{T}
        \ln
        \bM
        \re^{
        \frac{\theta}{2}
        \int_0^T
        Q_\theta(t)
        \rd t}
    \Big)
    =
    -
    \frac{1}{4\pi}
    \int_{\mR}
    \ln\det(I_{2n}
    -
    \theta \Pi_\theta(\lambda)
    )
    \rd \lambda.
\end{equation}
The unitarity of the matrix $N$ in (\ref{N}) and the relation (\ref{PiDel}) imply the  unitary equivalence
$$
    I_{2n}-\theta \Pi_\theta(\lambda)
     =
    N
    \begin{bmatrix}
        I_n - \theta \Delta_\theta(\lambda) & 0\\
        0 & I_n - \theta \Delta_\theta(-\lambda)^\rT
    \end{bmatrix}
    N^*,
$$
whereby
$$
    \det(I_{2n}-\theta \Pi_\theta(\lambda))
    =
    \det (I_n - \theta \Delta_\theta(\lambda)) \det (I_n - \theta \Delta_\theta(-\lambda)),
$$
and hence,
\begin{align}
\nonumber
    \int_{\mR}
    \ln\det(I_{2n}
    -
    \theta \Pi_\theta(\lambda)
    )
    \rd \lambda
    & =
    \int_{\mR}
    (
    \ln\det(I_n - \theta \Delta_\theta(\lambda))
    +
        \ln\det(I_n - \theta \Delta_\theta(-\lambda)))
        \rd \lambda\\
\label{lndet2}
    & =
    2
    \int_{\mR}
    \ln\det(I_n - \theta \Delta_\theta(\lambda))
    \rd \lambda.
\end{align}
Substitution of (\ref{lndet2}) into (\ref{lndet1}) leads to
\begin{equation}
\label{lndet3}
    \lim_{T\to +\infty}
    \Big(
        \frac{1}{T}
        \ln
        \bM
        \re^{
        \frac{\theta}{2}
        \int_0^T
        Q_\theta(t)
        \rd t}
    \Big)
    =
    -
    \frac{1}{2\pi}
    \int_{\mR}
    \ln\det(I_n
    -
    \theta \Delta_\theta(\lambda)
    )
    \rd \lambda.
\end{equation}
The representation (\ref{Upsclass}) for the quantum QEF rate $\Ups(\theta)$ is now obtained by comparing (\ref{UpsDel}) of Lemma~\ref{lem:UpsDel} with (\ref{lndet3}).
\end{proof}

Theorem~\ref{th:QEFclass} reduces the computation of the QEF growth rate to that for the infinite-dimensional Gaussian process $\zeta$. The classical QEF rate on the right-hand side of (\ref{Upsclass}) can be found with arbitrary accuracy by using the truncation (\ref{Qr}). In view of (\ref{fk})--(\ref{hk}),  this corresponds to retaining only the first $2r+2$ terms in the Taylor series expansion of $\phi$ in (\ref{phi}) as
\begin{equation}
\label{phi2r1}
    \phi(u)\approx \sum_{k=0}^{2r+1} \phi_k u^k.
\end{equation}
Instead of $\phi_0, \ldots, \phi_{2r+1}$  in (\ref{fk}), (\ref{gk}),    alternative coefficients can also be used in order for the resulting approximation to retain qualitative properties of the function $\phi$   (such as positiveness) in addition to $\phi(0)=1$, which will be discussed in Section~\ref{sec:polrat}.

\section{Approximate QEF  rate computation using a truncated Gaussian process}
\label{sec:trunc}

Theorem~\ref{th:QEFclass} can be practically used by computing the classical QEF rate for an  $\mR^\nu$-valued Gaussian diffusion process
\begin{equation}
\label{Zr}
    Z_r
    :=
    \begin{bmatrix}
    \zeta_0\\
    \vdots\\
    \zeta_r
    \end{bmatrix}
\end{equation}
of dimension
\begin{equation}
\label{nu}
  \nu:= 4(r+1)n,
\end{equation}
where the parameter $r=0,1,2,\ldots$ controls the quality of approximating the exact value of the quantum QEF rate in (\ref{Upsclass}). The process $Z_r$ consists of $\zeta_0, \ldots, \zeta_r$ from (\ref{xietak}), in terms of which the truncation  (\ref{Qr}) of the process $Q_\theta$ in (\ref{Q}) is represented as
\begin{equation}
\label{QZr}
  Q_{\theta,r}
  :=
  Z_r^\rT
  H_{\theta,r}
  Z_r,
\end{equation}
where
\begin{equation}
\label{Hr}
    H_{\theta,r}
    :=
    \diag_{0\< k\< r}
    h_k(\theta) \in \mS_\nu
\end{equation}
is the corresponding submatrix of the matrix $\cR(H_\theta)$ associated with (\ref{H}), (\ref{hk}). In view of the SDEs (\ref{SDE0}), (\ref{SDEk}), the process $Z_r$ in (\ref{Zr}) is produced from the $\mR^{2n}$-valued standard Wiener process $\omega$ by a finite-dimensional shaping filter with the state-space realisation
\begin{equation}
\label{cABI}
    \left[
    \begin{array}{c|c}
    \cA_r & \cB_r\\
      \hline
      I_\nu &  0
    \end{array}
    \right],
\end{equation}
where $\cA_r \in \mR^{\nu\x \nu}$, $\cB_r\in \mR^{\nu\x 2n}$ are submatrices of $\cA$, $\cB$ from  (\ref{cAB}) given by
\begin{equation}
\label{cABr}
    \cA_r
    :=
    \begin{bmatrix}
        I_2\ox A   & 0 & 0 &\ldots &\ldots & 0\\
        I_2\ox \alpha_1  & I_2\ox A & 0 & \ldots  &\ldots & 0\\
        0   & I_2\ox \alpha_2 & I_2\ox A & \ldots  &\ldots & 0\\
        \ldots    &\ldots   & \ldots  & \ldots  &\ldots &  \ldots\\
        \ldots    &\ldots   & \ldots  & \ldots  &I_2\ox A &  0\\
        \ldots    &\ldots   & \ldots  & \ldots  &I_2\ox \alpha_{2r+1} &  I_2\ox A
    \end{bmatrix},
    \quad
    \cB_r
    :=
    \begin{bmatrix}
      \cR(S)\\
      0\\
      \vdots\\
      0
    \end{bmatrix},
\end{equation}
with $\cA_r$ inheriting the Hurwitz property from $A$. Note that the matrices $\cA_r$ and $H_{\theta,r}$ in (\ref{Hr})  involve the matrices  $\alpha_1, \ldots, \alpha_{2r+1}$    and $\beta_0, \ldots, \beta_{2r+1}$ from (\ref{alfbetgamnext})--(\ref{bet0}). In order for these matrices  (and hence, the related processes $Z_r$, $Q_{\theta,r}$ in (\ref{QZr}),  (\ref{Zr})) to be well-defined, the condition (\ref{gamdet}) can be replaced with its weaker version
\begin{equation}
\label{gamdetr}
  \det \gamma_j \ne 0,
  \qquad
  j =0, \ldots, 2r,
\end{equation}
where $\det \gamma_0\ne 0$ holds in view of (\ref{alfbetgam0}) due to the nonsingularity of the CCR matrix $\Theta$.
The following theorem (its proof is outlined below for completeness) applies the results of  \cite[Proposition 6.3.1 and its proof on pp. 66--68]{MG_1990} (see also \cite{BV_1985}) to computing the QEF rate for the truncated processes. The slight modification here is that, in contrast to  the standard risk-sensitive settings,  the matrix $H_{\theta,r}$ in (\ref{Hr}) is indefinite. 

\begin{thm}
\label{th:QEFr}
For a given $r\> 0$,  suppose (\ref{gamdetr}) holds for  the matrices $\gamma_0, \ldots, \gamma_{2r}$ in (\ref{alfbetgamnext}), (\ref{alfbetgam0}). Also, suppose the  risk sensitivity parameter $\theta>0$ and the matrix $H_{\theta,r}$ in (\ref{Hr}) satisfy
\begin{equation}
\label{spec2}
    \theta
    \sup_{\lambda\in \mR}
    \lambda_{\max}(f_r(i\lambda)^*H_{\theta,r} f_r(i\lambda))
    < 1,
    \qquad
    f_r(s):= (sI_\nu-\cA_r)^{-1}\cB_r,
\end{equation}
where $f_r$  is
the transfer function of the shaping filter (\ref{cABI}) of the stationary Gaussian diffusion process $Z_r$ with the matrices (\ref{cABr}). Then the process $Q_{\theta,r}$ in (\ref{Qr}), associated with $Z_r$  by (\ref{QZr}), satisfies
\begin{equation}
\label{QEFrate}
    \lim_{T\to +\infty}
    \Big(
        \frac{1}{T}
        \ln
        \bM
        \re^{\frac{\theta}{2} \int_0^T Q_{\theta,r}(t)\rd t}
    \Big)
    =
    \frac{1}{2}
    \Tr
    (
    \cB_r\cB_r^\rT    a_{\theta,r}),
\end{equation}
where $a_{\theta,r} \in \mS_\nu$ is the unique stabilising solution of the ARE
\begin{equation}
\label{AREr}
    \cA_r^\rT a_{\theta,r} + a_{\theta,r}\cA_r
   + \theta H_{\theta,r} +
  a_{\theta,r}\cB_r\cB_r^\rT a_{\theta,r} = 0
\end{equation}
in the sense that the matrix $\cA_r + \cB_r\cB_r^\rT a_{\theta,r} $ is Hurwitz.
\hfill$\square$
\end{thm}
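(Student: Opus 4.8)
The plan is to adapt the time-domain risk-sensitive argument of \cite[Proposition~6.3.1 and its proof, pp.~66--68]{MG_1990} (see also \cite{BV_1985}), the only structural novelty being that the quadratic weight $H_{\theta,r}$ in (\ref{Hr}) is indefinite rather than positive semi-definite. Since $\cA_r$ in (\ref{cABr}) inherits the Hurwitz property from $A$, the process $Z_r$ in (\ref{Zr}) is the stationary solution of $\rd Z_r = \cA_r Z_r\rd t + \cB_r\rd \omega$, so that $Z_r(0)$ is zero-mean Gaussian with covariance $\cP_r\in \mS_\nu^+$ solving the ALE $\cA_r\cP_r + \cP_r\cA_r^\rT + \cB_r\cB_r^\rT = 0$ (the finite section of (\ref{cPALE}) on the first $\nu$ coordinates). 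First I would fix $T>0$, condition on $Z_r(0)=x$, and consider the finite-horizon conditional moment
$$
    v(t,x)
    :=
    \bM_x\,
    \re^{\frac{\theta}{2}\int_0^t Z_r(s)^\rT H_{\theta,r} Z_r(s)\rd s},
    \qquad
    (t,x)\in[0,T]\x\mR^\nu,
$$
which by the Feynman--Kac formula solves the linear parabolic PDE
$$
    \partial_t v
    =
    \frac{1}{2}\Tr(\cB_r\cB_r^\rT\partial_x^2 v)
    +
    (\cA_r x)^\rT\partial_x v
    +
    \frac{\theta}{2}\,x^\rT H_{\theta,r} x\, v,
    \qquad
    v(0,\cdot)=1.
$$
Since the generator and the potential are quadratic in $x$, this admits the Gaussian-exponential ansatz $v(t,x)=\exp\big(\frac{1}{2} x^\rT\Pi_t x + \mu_t\big)$ with $\Pi_t=\Pi_t^\rT$, and matching the $x$-quadratic and the $x$-free terms splits the PDE into the matrix Riccati ODE
\begin{equation}
\label{RicODEplan}
    \dot\Pi_t = \cA_r^\rT\Pi_t + \Pi_t\cA_r + \theta H_{\theta,r} + \Pi_t\cB_r\cB_r^\rT\Pi_t,
    \qquad
    \Pi_0=0,
\end{equation}
together with the quadrature $\dot\mu_t = \frac{1}{2}\Tr(\cB_r\cB_r^\rT\Pi_t)$, $\mu_0=0$; the right-hand side of (\ref{RicODEplan}) vanishes exactly at the solutions of the ARE (\ref{AREr}).

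Next I would integrate $v(T,\cdot)$ against the invariant Gaussian law of $Z_r(0)$. For those $T$ for which $I_\nu - \cP_r\Pi_T$ is positive definite (on $\im\cP_r$ if $\cP_r$ is singular), a Gaussian integral gives
$$
    \bM\,\re^{\frac{\theta}{2}\int_0^T Q_{\theta,r}(t)\rd t}
    =
    \re^{\mu_T}\,\bM\,\re^{\frac{1}{2} Z_r(0)^\rT\Pi_T Z_r(0)}
    =
    \frac{\re^{\mu_T}}{\sqrt{\det(I_\nu - \cP_r\Pi_T)}},
$$
hence
$$
    \frac1T\ln\bM\,\re^{\frac{\theta}{2}\int_0^T Q_{\theta,r}(t)\rd t}
    =
    \frac{\mu_T}{T}
    -
    \frac{1}{2T}\ln\det(I_\nu - \cP_r\Pi_T).
$$
Granted that, under (\ref{spec2}), the solution of (\ref{RicODEplan}) exists for all $t\ge0$, converges as $t\to+\infty$ to the stabilising solution $a_{\theta,r}$ of (\ref{AREr}), and keeps $I_\nu - \cP_r\Pi_t$ uniformly positive definite, one has $\dot\mu_t\to\frac{1}{2}\Tr(\cB_r\cB_r^\rT a_{\theta,r})$, hence $\mu_T/T\to\frac{1}{2}\Tr(\cB_r\cB_r^\rT a_{\theta,r})$, while $\det(I_\nu-\cP_r a_{\theta,r})$ is a fixed positive constant so the $\ln\det$ term is $O(1/T)$; letting $T\to+\infty$ then yields (\ref{QEFrate}). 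The passage from the conditional to the unconditional moment and the interchange of $\lim$ with the remaining operations are justified by a uniform-integrability estimate of the same type as in the proof of Theorem~\ref{th:QEFclass}.

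The main obstacle is precisely this global solvability and convergence statement for the Riccati flow (\ref{RicODEplan}) in the indefinite-weight case: since $\dot\Pi_0=\theta H_{\theta,r}$ need not be sign-definite, the monotone-convergence argument available for $H_{\theta,r}\succcurlyeq0$ is not at hand, and one must work instead through the Hamiltonian-matrix characterisation. The hypothesis (\ref{spec2}) says exactly that $\theta\,f_r(i\lambda)^*H_{\theta,r}f_r(i\lambda)\prec I_{2n}$ for every $\lambda\in\mR$, a one-sided $\cH_\infty$-type bound which, just as in the bounded-real lemma and the $\cH_\infty$-entropy theory behind \cite[Ch.~6]{MG_1990}, ensures that the Hamiltonian matrix
$$
    \begin{bmatrix}
        \cA_r & \cB_r\cB_r^\rT\\
        -\theta H_{\theta,r} & -\cA_r^\rT
    \end{bmatrix}
$$
has no eigenvalues on $i\mR$ and that its stable invariant subspace is the graph of a unique symmetric matrix $a_{\theta,r}$ with $\cA_r + \cB_r\cB_r^\rT a_{\theta,r}$ Hurwitz; the solution of (\ref{RicODEplan}) issued from $0$ then tends to $a_{\theta,r}$, which is the content of \cite[pp.~66--68]{MG_1990} and \cite{BV_1985} (the positive-semidefiniteness hypothesis there serving only to streamline this step and the accompanying a~priori bounds). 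As a consistency check, the Fredholm-determinant/Toeplitz argument of Theorem~\ref{th:QEFclass}, applied to the finite-dimensional process $Z_r$ with spectral density $f_r(i\lambda)f_r(i\lambda)^*$, also identifies the left-hand side of (\ref{QEFrate}) with $-\frac{1}{4\pi}\int_\mR\ln\det(I_{2n}-\theta f_r(i\lambda)^*H_{\theta,r}f_r(i\lambda))\rd\lambda$, whose integrand is well-defined precisely under (\ref{spec2}), with the link to the condition (\ref{spec1}) running through Lemma~\ref{lem:Pi}. What then remains is routine: uniqueness of the stabilising ARE solution, positivity of $I_\nu-\cP_r a_{\theta,r}$, and the elementary Gaussian-integral and limit bookkeeping.
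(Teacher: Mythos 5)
Your proposal follows essentially the same route as the paper's proof: condition on the initial state, use the Feynman--Kac/Markov structure to get a parabolic PDE for the conditional moment, solve it with an exponential-quadratic ansatz yielding the Riccati ODE $\dot a_T=\cA_r^\rT a_T+a_T\cA_r+\theta H_{\theta,r}+a_T\cB_r\cB_r^\rT a_T$ with $a_0=0$ plus the trace quadrature, let $a_T$ converge to the stabilising ARE solution (citing \cite{MG_1990,BV_1985} for the indefinite-weight case), and average over the invariant Gaussian law of $Z_r(0)$, with (\ref{spec2}) guaranteeing the residual Gaussian integral stays finite so only the Cesaro limit of the quadrature survives. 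The only cosmetic difference is that you apply the quadratic-exponential ansatz directly to the linear PDE, whereas the paper first passes to $\ln K_T$ via the Fleming logarithmic transformation; the resulting equations are identical.
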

\begin{proof}
As mentioned above, (\ref{gamdetr}) makes the processes $Z_r$ in (\ref{Zr})  and $Q_{\theta,r}$ in (\ref{QZr}) well-defined.
Omitting the subscripts $\theta$, $r$ for brevity, the Gaussian diffusion process $Z$ is governed by the SDE
\begin{equation}
\label{dZ}
  \rd Z = \cA Z\rd t + \cB \rd \omega,
\end{equation}
which describes the shaping filter (\ref{cABI}) driven by the standard Wiener  process $\omega$ in $\mR^{2n}$, with the matrices $\cA$, $\cB$ from (\ref{cABr}).  Assuming the risk sensitivity parameter $\theta$ to be fixed, the conditional QEF
\begin{equation}
\label{QEFT}
    K_T(z)
    :=
    \bM
    \big(
    \re^{\frac{\theta}{2} \int_0^T Z(t)^\rT H Z(t)\rd t}\,
    \big|\,
    Z(0)=z
    \big)>0,
    \qquad
    T \> 0, \
    z \in \mR^\nu,
\end{equation}
for the process $Z$
over a finite time horizon $T$ satisfies the integro-differential equation
\begin{equation}
\label{KIDE}
    K_{T+\tau}(z)
    =
    \bM
    \big(
    K_T(Z(\tau))
    \re^{\frac{\theta}{2} \int_0^\tau Z(t)^\rT H Z(t)\rd t}\,
    \big|\,
    Z(0)=z
    \big),
    \qquad
    \tau \> 0,
\end{equation}
with the initial condition
\begin{equation}
\label{K0}
    K_0=1.
\end{equation}
This follows from
the homogeneous Markov property of $Z$ and the tower property of iterated conditional expectations.  By letting $\tau\to 0+$ and using the Ito lemma \cite{KS_1991}, (\ref{KIDE}) gives rise to
the PDE
\begin{equation}
\label{KPDE}
  \d_T K_T
  =
  \cG(K_T) + \frac{\theta}{2} z^\rT H z K_T,
\end{equation}
where $\cG$ is the infinitesimal generator of the diffusion process $Z$ in (\ref{dZ}) acting on a twice continuously differentiable function $\varphi: \mR^\nu\to  \mR$ with the gradient vector $\varphi'$ and the Hessian matrix $\varphi''$  as 
$$
    \cG(\varphi)(z)
    =
    z^\rT \cA^\rT \varphi'(z)  + \frac{1}{2} \Tr (\cB\cB^\rT \varphi''(z)),
    \qquad
    z \in \mR^\nu.
$$
Since $     \frac{1}{\varphi}
    \cG(\varphi)
    =
    \cG(\ln \varphi)
    +
    \frac{1}{2}
    |\cB^\rT(\ln \varphi)'|^2
$ for positive functions $\varphi$
in accordance with the
Fleming logarithmic transformation \cite{F_1982}
(see also \cite[Eq.~(81) on p.~201]{BFP_2002}), then (\ref{KPDE}) takes the form
\begin{equation}
\label{KPDElog}
  \d_T \ln K_T(z)
  =
  \cG(\ln K_T)(z)
  +
  \frac{1}{2}
    |\cB^\rT(\ln K_T)'(z)|^2 + \frac{\theta}{2} z^\rT H z,
\end{equation}
which admits a quadratic  ansatz
\begin{equation}
\label{Klog}
    \ln K_T(z)
    =
    \frac{1}{2} z^\rT a_T z
    + c_T,
\end{equation}
where $a_T$, 
$c_T$ are continuously differentiable functions of $T$ with values in $\mS_\nu$, 
$\mR$,  respectively, and zero initial conditions
\begin{equation}
\label{a0c0}
    a_0=0,
    \qquad
    c_0 = 0
\end{equation}
in view of (\ref{K0}).   By substituting (\ref{Klog}) into (\ref{KPDElog}) and equating the corresponding coefficients of the quadratic functions in
$$
    z^\rT \dot{a}_T z +
    2\dot{c}_T
   =
  2z^\rT \cA^\rT
  a_T z
    + \Tr (\cB\cB^\rT a_T)+
    |\cB^\rT
    a_T z
    |^2  +
    \theta z^\rT H z
$$
(where $\dot{(\ )}:= \d_T(\cdot)$ is the time derivative), it follows that
\begin{align}
\label{adot}
    \dot{a}_T
  & =
  \cA^\rT a_T + a_T\cA
   + \theta H +
  a_T \cB\cB^\rT a_T,\\
\label{cdot}
    \dot{c}_T
  & =
  \frac{1}{2}
  \Tr (\cB\cB^\rT a_T).
\end{align}
The solution of the Riccati ODE (\ref{adot}) with the zero initial condition in (\ref{a0c0}) has a limit
\begin{equation}
\label{alim}
    a_\infty:= \lim_{T\to +\infty} a_T
\end{equation}
which is the unique stabilising (in the sense that  $\cA + \cB\cB^\rT a_\infty$ is Hurwitz) solution  of the ARE
$$
    \cA^\rT a_\infty + a_\infty\cA
   + \theta H +
  a_\infty \cB\cB^\rT a_\infty = 0
$$
in (\ref{AREr}). Therefore,
integration of (\ref{cdot}) with the zero initial condition from (\ref{a0c0}) yields
\begin{equation}
\label{clim}
    \lim_{T\to +\infty}
    \frac{c_T}{T}
    =
    \frac{1}{2}
    \Tr
    \Big(
    \cB\cB^\rT
    \lim_{T\to +\infty}
    \Big(
    \frac{1}{T}
    \int_0^T
    a_t \rd t
    \Big)
    \Big)
    =
    \frac{1}{2}
    \Tr
    (
    \cB\cB^\rT    a_\infty)
\end{equation}
since the Cesaro mean inherits the limit value (\ref{alim}). It now remains to note that
\begin{equation}
\label{Mexp0}
    \bM
    \re^{\frac{\theta}{2} \int_0^T Z(t)^\rT H Z(t)\rd t}
    =
    \bM K_T (Z(0))
    =
    \re^{c_T}
    \bM\re^{\frac{1}{2} Z(0)^\rT a_T Z(0)}
\end{equation}
in view of (\ref{QEFT}), (\ref{Klog}), where the rightmost expectation is  over the invariant zero-mean Gaussian distribution of the process $Z$ with an appropriate submatrix $\cP$ of the covariance matrix in (\ref{cP}), (\ref{cPALE}).
The condition (\ref{spec2}) implies that
$
    \lambda_{\max}(a_\infty \cP) < 1
$, which secures  a finite limit for the rightmost expectation in (\ref{Mexp0}):
$$
    \lim_{T\to +\infty}
    \bM\re^{\frac{1}{2} Z(0)^\rT a_T Z(0)}
    =
    \frac{1}{\sqrt{\det(I_\nu - a_\infty\cP)}} < +\infty.
$$
Hence, (\ref{clim}) implies that
$$
    \lim_{T\to +\infty}
    \Big(
        \frac{1}{T}
        \ln
        \bM
        \re^{\frac{\theta}{2} \int_0^T Z(t)^\rT H Z(t)\rd t}
    \Big)
    =
    \lim_{T\to +\infty}
    \frac{c_T}{T}
    =
    \frac{1}{2}
    \Tr
    (
    \cB\cB^\rT    a_\infty),
$$
thus establishing (\ref{QEFrate}).
\end{proof}

A combination of Theorems~\ref{th:QEFclass}, \ref{th:QEFr} allows the quantum QEF growth rate (\ref{Ups}) to be computed as the limit
\begin{equation}
\label{Upslim}
    \Ups(\theta)
    =
    \lim_{r\to +\infty}
    \Ups_r(\theta),
    \qquad
    \Ups_r(\theta)
    :=
    \frac{1}{4}
    \Tr
    (
    \cB_r\cB_r^\rT    a_{\theta,r})
\end{equation}
in terms of the ``truncated'' classical QEF rates  (\ref{QEFrate}).    A practical application of (\ref{Upslim}) consists in using the ``prelimit'' value $\Ups_r(\theta)$ at a finite $r$ large enough for the convergence to manifest itself. In this regard, of interest is the question of exploiting the special structure of the matrices $\cA_r$, $\cB_r$ in (\ref{cABr})  and $H_{\theta, r}$ in (\ref{Hr}) for computing the truncated QEF rate (\ref{QEFrate}) recursively in $r$. To this end, we note that the sparsity of $\cB_r$ in (\ref{cABr}) leads to
\begin{equation}
\label{cBB}
  \cB_r\cB_r^\rT
  =
  \begin{bmatrix}
    \cR(S)^2 & 0\\
    0 & 0
  \end{bmatrix}
  =
  \begin{bmatrix}
    \cR(B\Omega^\rT B^\rT) & 0\\
    0 & 0
  \end{bmatrix}  ,
\end{equation}
where the block $\cR(S)^2$ is computed in (\ref{cRS2}). Hence, only the first diagonal block $(a_{\theta,r})_{11} \in \mS_{2n}$ of the matrix $a_{\theta,r}$ enters the trace in (\ref{QEFrate}):
\begin{equation}
\label{a11}
    \Tr
    (\cB_r\cB_r^\rT    a_{\theta,r})
    =
    \Tr(\cR(B\Omega^\rT B^\rT) (a_{\theta,r})_{11}).
\end{equation}
Also, due to (\ref{cBB}), the ARE (\ref{AREr}) has a ``low-rank''  nonlinearity in the sense that its quadratic term 
depends only on the first block-column $(a_{\theta,r})_{\bullet 1} \in \mR^{\nu \x 2n}$ of the matrix $a_{\theta,r}$:
\begin{equation}
\label{aBBa}
    a_{\theta,r}\cB_r\cB_r^\rT a_{\theta,r}
    =
    (a_{\theta,r})_{\bullet 1}
    \cR(B\Omega^\rT B^\rT)
    (a_{\theta,r})_{\bullet 1}^\rT.
\end{equation}
Another relevant observation is that the solution $a_{\theta,r}$ of (\ref{AREr}) satisfies the ARE
\begin{equation}
\label{AREr1}
    \cA_r a_{\theta,r}^{-1} + a_{\theta,r}^{-1}\cA_r^\rT
   +
  \cB_r\cB_r^\rT + \theta a_{\theta,r}^{-1} H_{\theta,r}a_{\theta,r}^{-1} = 0
\end{equation}
(provided $\det a_{\theta,r}\ne 0$)
whose analysis can benefit from the block lower triangular structure of the matrix $\cA_r$ in (\ref{cABr}), similarly to that of $\cA$ in the ALE (\ref{cPALE}). This can also be combined with the Schur complement relations \cite{HJ_2007} between the blocks of $a_{\theta,r}$ and $a_{\theta,r}^{-1}$  in (\ref{a11})--(\ref{AREr1}). On the other hand, the factorially fast decay of the coefficients  (\ref{phik0}) leads to a rapid convergence in (\ref{Upslim}), so that already the initial approximation $\Ups_0(\theta)$ of the QEF rate appears to be satisfactory for moderate values of $\theta$, at least as the numerical example in Section~\ref{sec:exp} demonstrates. In view of (\ref{a11}), this approximation is described by
\begin{equation}
\label{Ups0}
    \Ups_0(\theta)
    =
    \frac{1}{4}
    \Tr(\cR(B\Omega^\rT B^\rT) (a_{\theta,0})_{11})
\end{equation}
in terms of the stabilising solution $a_{\theta,0}$  of the ARE (\ref{AREr}) of order $\nu=4n$ in (\ref{nu}) at $r=0$:
\begin{equation}
\label{ARE0}
    \cA_0^\rT a_{\theta,0} + a_{\theta,0}\cA_0
   + \theta h_0(\theta)+
  a_{\theta,0}\cB_0\cB_0^\rT a_{\theta,0} = 0.
\end{equation}
Here,
\begin{equation}
\label{cAB0}
    \cA_0
    =
    \begin{bmatrix}
        I_2\ox A   & 0 \\
        I_2\ox \Theta   & I_2\ox A
    \end{bmatrix},
    \qquad
    \cB_0
    =
    \begin{bmatrix}
      \cR(S)\\
      0
    \end{bmatrix}
\end{equation}
in view of (\ref{cABr}), (\ref{alfbetgam0}), and
\begin{equation}
\label{h0}
    h_0(\theta)
    =
    \begin{bmatrix}
        I_{2n} & 0\\
        0& \theta \bJ \ox (\Theta^{-1}\mho \Theta^{-1})
    \end{bmatrix}
\end{equation}
in accordance with (\ref{phik0}), (\ref{bet0}), (\ref{fk})--(\ref{hk}).  This approximation does not involve the ALEs (\ref{alfbetgamnext}) and reduces to solving the ARE (\ref{ARE0}).

\section{Square root polynomial approximation of the function $\phi$}
\label{sec:polrat}

The function $\phi$ from (\ref{phi})  enters the truncated QEF rate (\ref{QEFrate}) through its coefficients $\phi_0, \ldots, \phi_{2r+1}$ which participate in the matrix $H_{\theta,r}$ in (\ref{Hr}) in view of (\ref{fk})--(\ref{hk}). This corresponds to the approximation of $\phi$ by the appropriately truncated Taylor series in (\ref{phi2r1}). In contrast to $\phi$,  the resulting polynomial of degree $2r+1$ takes both positive and negative values. The following alternative approximation is nonnegative everywhere on the real axis, thus better reflecting the positiveness of the original function.  
Consider the Taylor series expansion for the square root of $\phi$ in (\ref{phi}). Denote its regular branch, with positive values on the real axis,  by
\begin{equation}
\label{psi}
  \psi(u):= \sqrt{\phi(u)} = \sum_{k=0}^{+\infty} \psi_k u^k,
  \qquad
  u \in \mC,
\end{equation}
where $\psi _k \in \mR$ are coefficients. The squaring of (\ref{psi}) relates the coefficients (\ref{phik0}) of (\ref{phi}) to $\psi _0, \psi _1, \psi _2, \ldots$ by the convolutions
\begin{equation}
\label{phik}
    \phi_k = \sum_{j=0}^k \psi _j \psi _{k-j},
    \qquad
    k = 0, 1, 2,\ldots.
\end{equation}
Since $\phi_0=1$, the relation (\ref{phik}) leads to a recurrence  equation for the coefficients of (\ref{psi}):
\begin{equation}
\label{psik}
    \psi _0 = 1,
    \qquad
    \psi _k
    =
    \frac{1}{2}
    \Big(
        \phi_k - \sum_{j=1}^{k-1} \psi _j \psi _{k-j}
    \Big),
    \qquad
    k = 1, 2,3, \ldots.
\end{equation}
In particular,
$$
    \psi _1 = \frac{1}{4},
    \qquad
    \psi _2 = \frac{5}{96},
    \qquad
    \psi _3 = \frac{1}{128},
    \qquad
    \psi _4 = 8.5720\x 10^{-4}.
$$
The numerical computation of the subsequent coefficients shows that they form a fast decaying sequence; see Fig.~\ref{fig:gdecay}. 
\begin{figure}[htbp]
\begin{center}
\includegraphics[width=12cm]{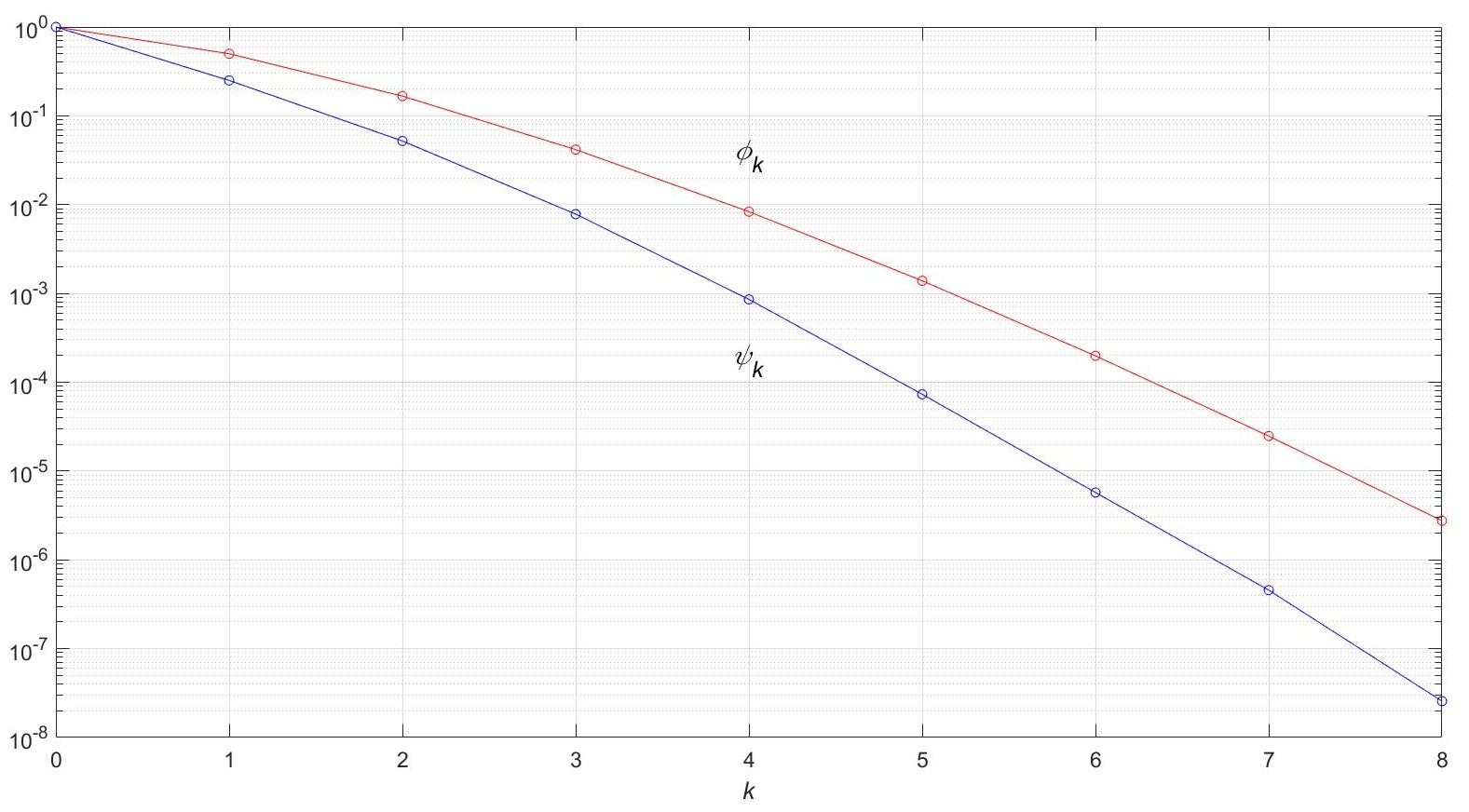}
\caption{The first nine coefficients $\phi_k$ from (\ref{phik0}) (red line) and $\psi _k$ from (\ref{psik}) (blue line)  on the logarithmic scale.}
\label{fig:gdecay}
\end{center}
\end{figure}
Therefore, the function $\phi$ in (\ref{phi}) admits a ``square root polynomial'' approximation (resembling the Choleski factorization \cite{HJ_2007} or the square root form of the Kalman filter \cite{B_1977}),
\begin{equation}
\label{phigr}
    \phi(u)
    \approx
    \Big(\sum_{k=0}^r \psi_k u^k\Big)^2
    =
    \sum_{k=0}^r
    \phi_k u^k
    +
    \sum_{k=r+1}^{2r}
    \sum_{j=k-r}^r
    \psi_j \psi_{k-j}
        u^k,
\end{equation}
with a reliable accuracy already for moderate values of the truncation parameter $r$, and, in contrast to (\ref{phi2r1}), being nonnegative on the real axis $u\in \mR$. The right-hand side of (\ref{phigr}) reproduces the first $r+1$ terms of the Taylor series expansion of $\phi$ in (\ref{phi}), while the coefficients $\phi_k$, with $r<k\< 2r$,  are replaced with     $\sum_{j=k-r}^r
    \psi_j \psi_{k-j}$, and the remainder $\sum_{k>2r}\phi_k u^k$ of the series is discarded.
These approximations are shown in Fig.~\ref{fig:phigr}. 
\begin{figure}[htbp]
\begin{center}
\includegraphics[width=12cm]{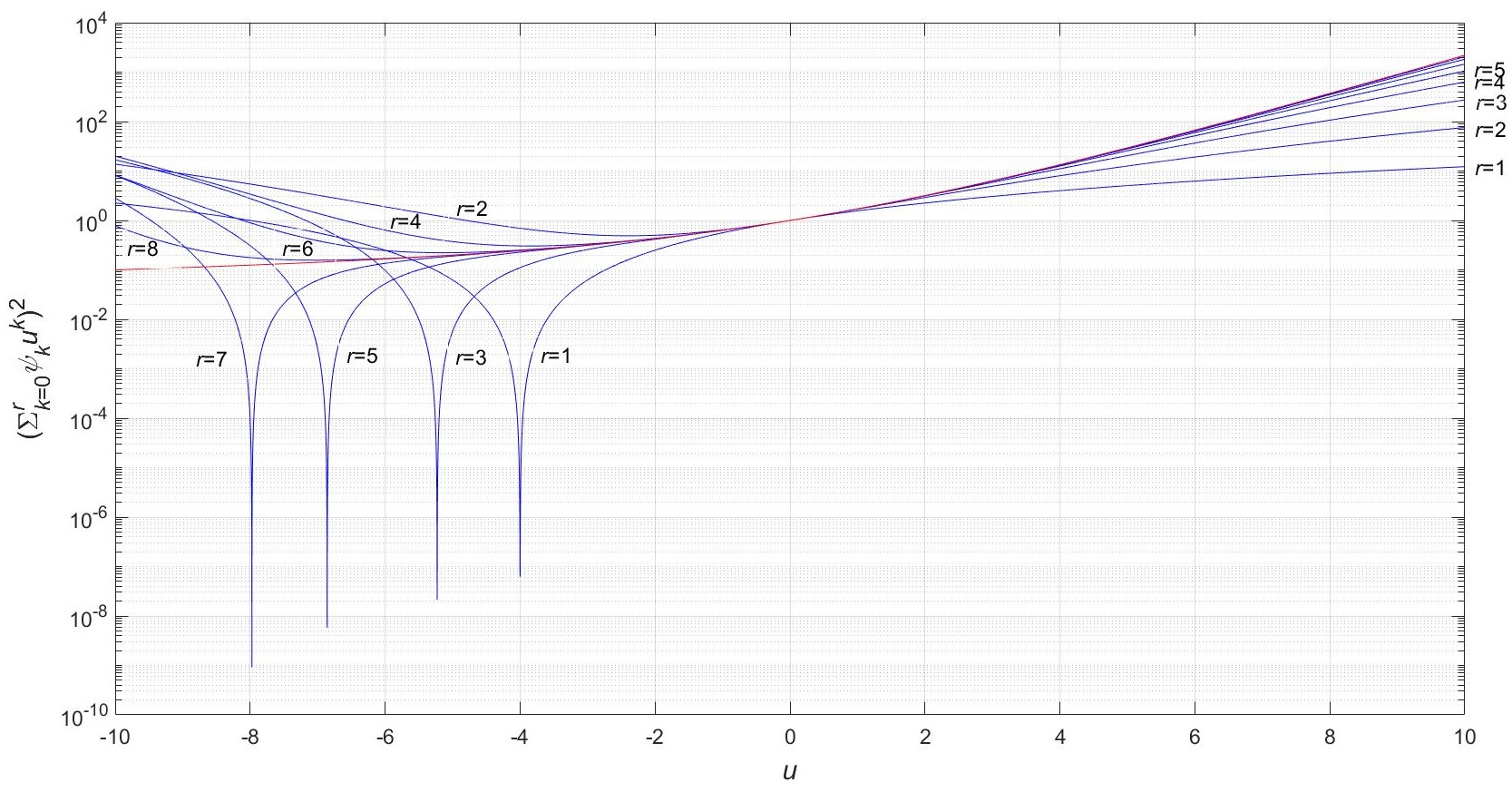}
\caption{The square root polynomial approximations (\ref{phigr}), with $r = 1, \ldots, 8$ (blue lines) for  the function (\ref{phi})  (red line) on the logarithmic scale. The troughs correspond to the real zeros of the polynomials $\sum_{k=0}^r \psi_k u^k$ for odd values of $r$.  The largest of these roots is $-4$ and pertains to the polynomial $1+\frac{1}{4}u$ with $r=1$.}
\label{fig:phigr}
\end{center}
\end{figure}

\section{A numerical example of computing the QEF rate in state space}
\label{sec:exp}

Two-mode OQHOs considered below, with $n=4$ system variables consisting of two position-momentum pairs \cite{S_1994}
and the CCR matrix  $\Theta= \frac{1}{2}\bJ\ox I_2$ in (\ref{XCCR}), can result as closed-loop systems from coherent (measurement-free) quantum feedback connections  of two cavities modelled by one-mode OQHOs. Mean-square optimal control settings for such systems, including applications to quantum optics, can be found, for example,  in \cite{NJP_2009}. The weighting of the system variables as in (\ref{SX})  (which yields  linear combinations of the positions and momenta in this case),   accompanied by  the transformations (\ref{STRM}), (\ref{SABC}),  can lead to a more complicated CCR matrix $\Theta$. For numerical illustration purposes, we will use the matrices
\begin{align*}
\Theta
    & :=
   {\scriptsize\begin{bmatrix}
         0 &   0.8697 &  -0.2444 &   0.4872\\
   -0.8697 &   0.0000 &   0.2612 &  -2.0179\\
    0.2444 &  -0.2612 &  -0.0000 &   1.1388\\
   -0.4872 &   2.0179 &  -1.1388 &  -0.0000
   \end{bmatrix}}, \\
A & :=
   {\scriptsize\begin{bmatrix}
    5.1304  &  5.7928 &   8.7655 &  -1.5445\\
   -9.0634  & -9.0965 & -14.7367 &   0.6865\\
    0.6371  &  0.1820 &  -0.6069 &   0.4491\\
   13.5996  &  5.4816 &   5.8039 &  -3.6400
   \end{bmatrix}},\\
B & :=
   {\scriptsize\begin{bmatrix}
    3.0301 &   0.1179 &   1.9804 &  -0.8723 &   0.9541 &   0.9578\\
   -2.1858 &  -2.0488 &  -2.0902 &   1.0467 &   1.7361 &  -0.2561\\
   -1.8423 &   0.7662 &  -0.4926 &   0.1977 &  -0.6104 &  -0.6082\\
   -3.3994 &  -3.6233 &  -2.0884 &  -1.5410 &   3.6763 &  -0.3150
   \end{bmatrix}},
\end{align*}
related by the PR condition (\ref{PR1}) and corresponding to the case of a three-mode ($m=6$) input quantum field with $J =\bJ\ox I_3$ in accordance with (\ref{JJ}). 
The matrix $A$ is Hurwitz, and its spectrum is $\{-1.3480 \pm 3.3108i, -2.7584 \pm 1.1650i\}$. Both $\Theta$ and $\mho$ in (\ref{BJB}) are nonsingular. The threshold value  (\ref{theta*}) of the risk sensitivity parameter $\theta$ and its classical counterpart (\ref{theta0}) are $\theta_* = 0.0792$ and  $\theta_0 = 0.0788$.
The results of numerical state-space computation of the QEF growth rate approximations $\Ups_r$,  based on the corollary (\ref{Upslim}) of Theorems~\ref{th:QEFclass}, \ref{th:QEFr},  are shown  for $r = 0, 1, 2, 3$ in Fig.~\ref{fig:Ups1} 
\begin{figure}[htbp]
\begin{center}
\includegraphics[width=12cm]{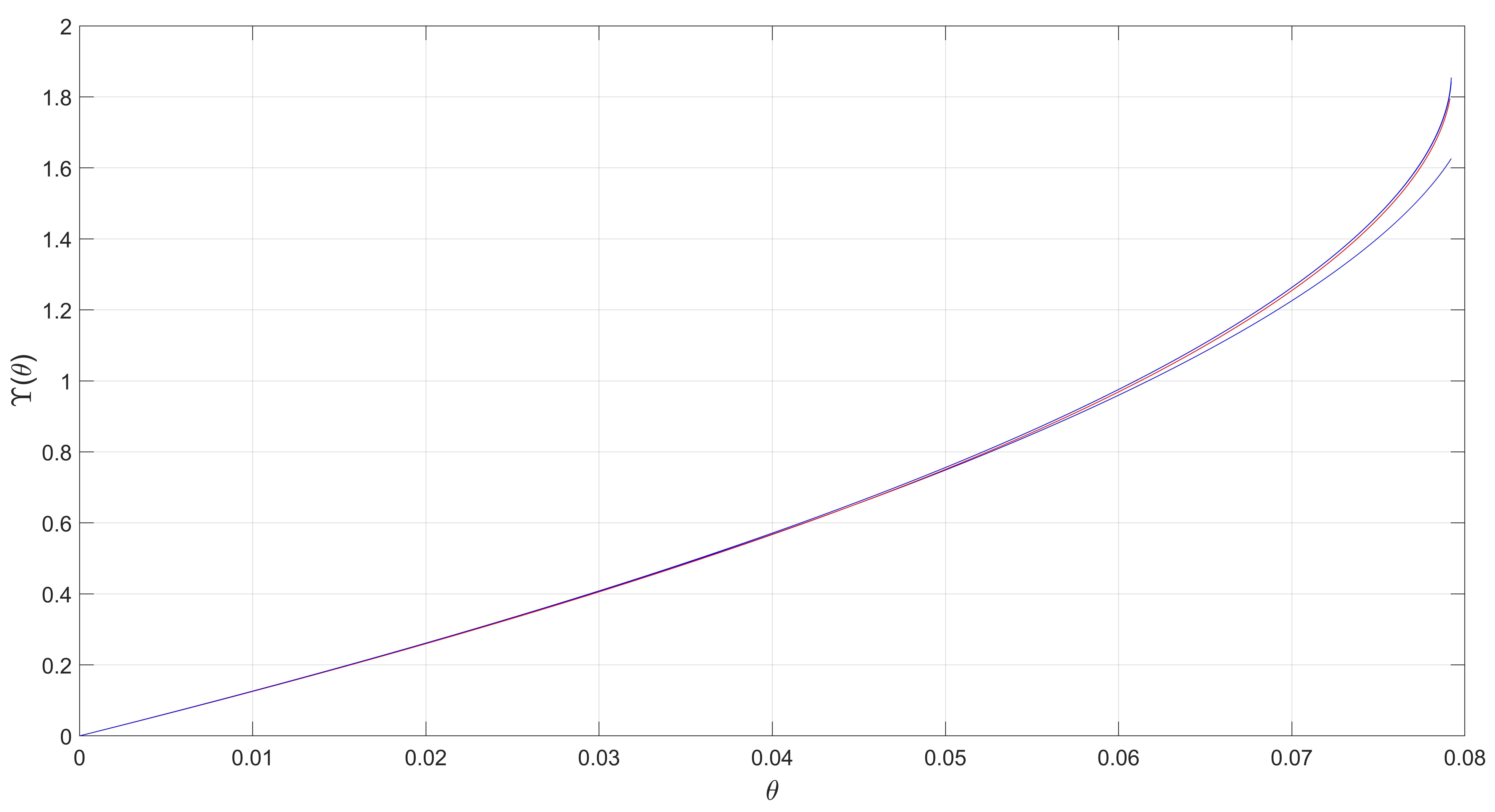}
\caption{The blue curves are the graphs of the QEF growth rate approximations $\Ups_r(\theta)$  (as functions of the risk sensitivity parameter $0\< \theta < \theta_*$) from (\ref{Upslim}) computed using Theorem~\ref{th:QEFr} for $r=0,1,2,3$ (the lower curve corresponds to $\Ups_0$ in (\ref{Ups0})--(\ref{h0})). The red curve is the outcome of the homotopy algorithm in the frequency domain \cite{VPJ_2020_IFAC,VPJ_2021}. }
\label{fig:Ups1}
\end{center}
\end{figure}
in comparison with those obtained through the homotopy algorithm of \cite{VPJ_2020_IFAC,VPJ_2021} in the frequency domain. No violation of the condition (\ref{gamdetr}) was observed.
The initial approximation $\Ups_0$, computed according to (\ref{Ups0})--(\ref{h0}), is satisfactory over the overwhelming part of the interval $[0,\theta_*)$, while the subsequent approximations $\Ups_1, \Ups_2, \Ups_3$ are practically indistinguishable from the frequency-domain result over the whole range.
The deviation of the graphs is noticeable only in the vicinity of the threshold value $\theta_*$. The four approximations $\Ups_0(\theta), \ldots, \Ups_3(\theta)$  are fairly close to each other even at a near-critical value of $\theta$ given in Tab.~\ref{tab:Ups}.
\begin{table}[htbp]
\caption{The values of the four QEF rate approximations at $\theta = 0.9999 \theta_* = 0.0792$.}
\begin{center}
{\scriptsize\begin{tabular}{|c||c|c|c|c|}
\hline
 $r$ & 0 & 1 & 2 & 3 \\
 \hline
 \hline
 $\Ups_r(\theta)$ &  1.6260 & 1.8427 & 1.8542 & 1.8543\\
\hline
\end{tabular}}
\label{tab:Ups}
\end{center}
\end{table}
Fig.~\ref{fig:Ups2} 
\begin{figure}[htbp]
\begin{center}
\includegraphics[width=12cm]{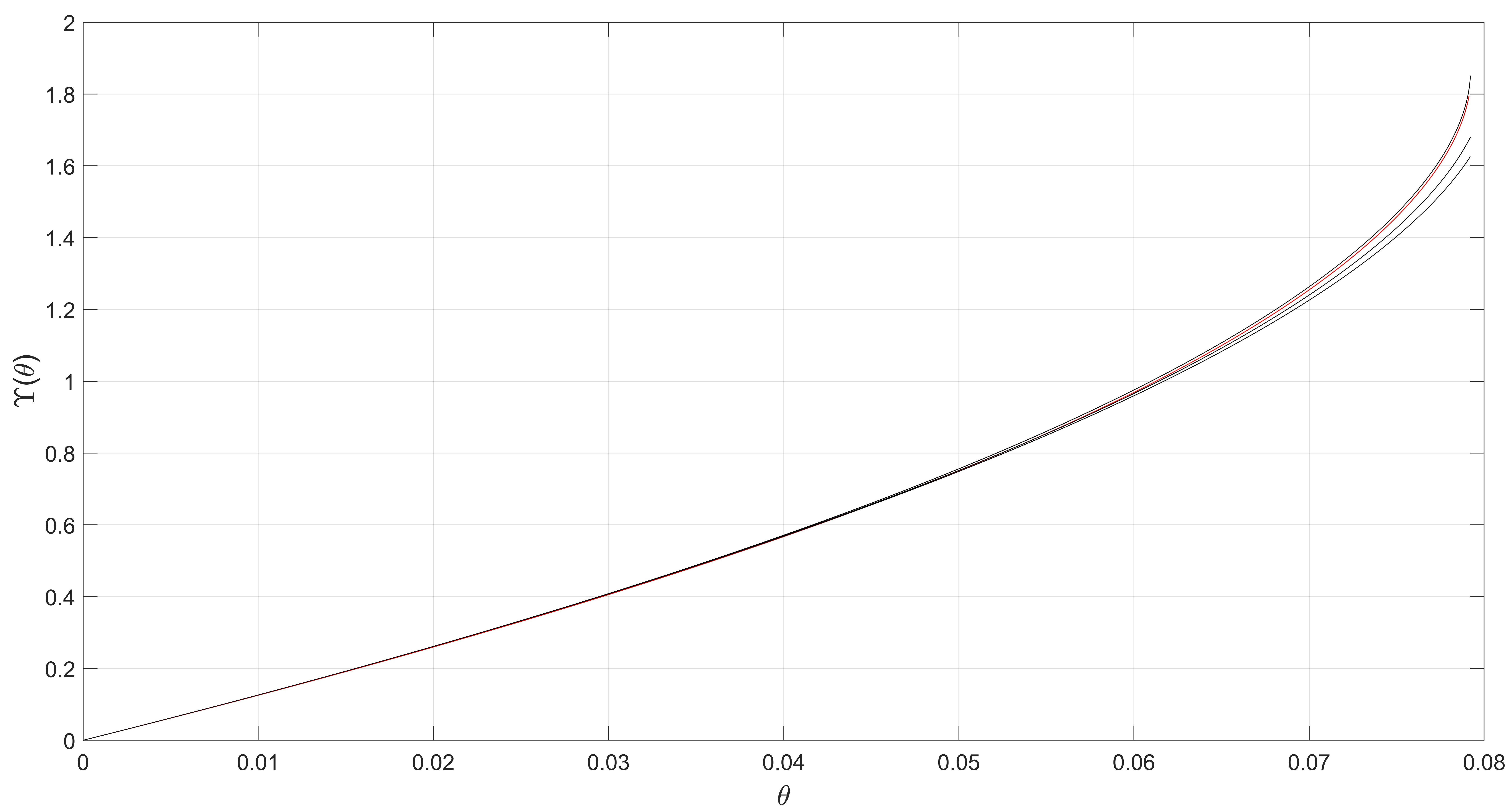}
\caption{The red curve is the same as in Fig.~\ref{fig:Ups1}. The black curves represent the QEF growth rate approximations  $\Ups_r$ from  (\ref{Upslim}) via Theorem~\ref{th:QEFr} for $r=0,1,2,3$ with the alternative coefficients corresponding to the square root polynomial approximation (\ref{phigr}) (the lower curve corresponds to $\Ups_0$). }
\label{fig:Ups2}
\end{center}
\end{figure}
visualises the alternative versions of the QEF growth rate approximations,   which are computed using   the square root polynomial approximation (\ref{phigr})  and appear to be of similar quality.

\section{Conclusion}
\label{sec:conc}

A method has been developed for the state-space  computation of the QEF growth rate for stable open quantum harmonic oscillators driven by vacuum input fields. This has been achieved by relating the frequency-domain representation of the  quantum  QEF rate to a similar functional for a stationary Gaussian random process produced by an infinite cascade of linear systems.
The infinite-dimensional  shaping filter has resulted from a spectral factorization of a special  entire function evaluated at a rational transfer matrix and is found by solving a sequence of ALEs. The latter have been obtained by using a system transposition technique for rearranging  a mixed product of linear systems and their duals similarly to the Wick ordering of annihilation and creation operators. A truncation of the ALE sequence is complemented by an ARE, which allows the QEF rate to be computed with any accuracy. Despite the rapid convergence due to the presence of factorially fast decaying coefficients, a circle of ideas has also been discussed towards a recursive solution of the Riccati equation with respect to the order of truncation and also for a square root polynomial approximation of the entire function.
The state-space computation of the QEF rate can be applied to the large deviations bounds on tail distributions for quantum system variables \cite{VPJ_2018a} and for guaranteed upper bounds on LQG costs in the case of quantum statistical uncertainties with a von Neumann relative entropy description \cite{VPJ_2018b} in robust performance analysis problems. These state-space methods are also applicable to risk-sensitive quantum control problems for OQHOs with QEF optimality criteria \cite{VJP_2020_IFAC}.

\section*{Acknowledgement}

This work is supported by the Australian Research Council grant DP210101938.

\end{document}